\journal{opticajournal} 
\newtheorem{remark}{Remark}
\newtheorem{lemma}{Lemma}
\newtheorem{theorem}{Theorem}
\newtheorem{corollary}{Corollary}
\newtheorem{definition}{Definition}
\newenvironment{proof}{\noindent\textit{Proof.}}{$\null\nobreak\hfill\ensuremath{\square}$\vspace{2mm}}
\begin{document}

\title{Holographic phase retrieval via Wirtinger flow: Cartesian form with auxiliary amplitude}

\author{Ittetsu Uchiyama, Chihiro Tsutake,\authormark{*} Keita Takahashi, and Toshiaki Fujii}

\address{Department of Information and Communication Engineering, Nagoya University, Furo-cho, Chikusa-ku, Nagoya, 464-8603, Japan}

\email{\authormark{*}ctsutake@nagoya-u.jp} 


\begin{abstract*}
We propose a new gradient method for holography,
where a phase-only hologram is parameterized by not only the phase but also amplitude.
The key idea of our approach is the formulation of a phase-only hologram using an auxiliary amplitude.
We optimize the parameters using the so-called Wirtinger flow algorithm in the Cartesian domain,
which is a gradient method defined on the basis of the Wirtinger calculus.
At the early stage of optimization, 
each element of the hologram exists inside a complex circle,
and it can take a large gradient while diverging from the origin.
This characteristic contributes to accelerating the gradient descent.
Meanwhile, at the final stage of optimization, each element evolves along a complex circle,
similar to previous state-of-the-art gradient methods.
The experimental results demonstrate that our method outperforms previous methods,
primarily due to the optimization of the amplitude.
\end{abstract*}

\section{Introduction}
\label{s1}
Holography~\cite{Gabor48,Denisyuk62,Lippmann91}, 
a method for displaying 3D images on the basis of optical interference and diffraction,
has gained significant attention in a variety of research fields~\cite{Javidi21}.
Recently, holography has been implemented using a spatial light modulator~(SLM)~\cite{Yang23},
which can display 3D images by modulating either the amplitude or phase of incident light.
The quality of the displayed images primarily depends on the amplitude- or phase-modulation pattern, 
referred to as computer-generated hologram~\cite{Sahin20}.
Therefore, computing accurate holograms for displaying high-quality images is an important task in holography.
Throughout this paper,
we assume that an SLM can modulate only the phase of incident light, not the amplitude.
Under such assumption, we address the so-called holographic phase retrieval problem:
optimization of the phase-only hologram, 
i.e., phase-modulation pattern, while keeping the amplitude constant.

To develop our theory,
we begin with a brief overview of previous methods for holographic phase retrieval.
In the Gerchberg-Saxton~(GS) method~\cite{Gerchberg72}, 
a phase-only hologram is obtained through iterative projections between the image and hologram constraints.
Variants of the GS method have been proposed by many researchers, 
e.g., Fienup~\cite{Fienup82} and Netrapalli et al.~\cite{Netrapalli15}.
Guendy et al.~\cite{Guendy21} applied the Kaczmarz method~\cite{Kaczmarz37} to holography,
where a quadratic system of equations corresponding to holographic phase retrieval is approximately solved.
Chakravarthula et al.~\cite{Chakravarthula19} proposed a gradient method for holography,
where a hologram is obtained by minimizing a loss function defined in the image domain.
In \cite{Zhang22}, deep learning methods for holography are summarized;
a hologram is inferred by neural networks that are optimized on a vast set of images and/or holograms.
In \cite{Wang24color}, Wang et al. proposed a chromatic aberration-free method for color holography with large viewing angle;
in \cite{Wang24liquid}, they also proposed an optical method,
combining a liquid-lens-based camera and physical model-driven neural networks.

Among the aforementioned previous works,
the gradient method produces holograms that have the capability to display high-quality images.
In the gradient method~\cite{Chakravarthula19} and its variants~\cite{Chakravarthula20,Peng20,Wang22},
a phase-only hologram is parameterized by the phase only, not the amplitude.
During optimization,
each element of a hologram is updated \textbf{along a complex circle}, 
which is an important characteristic of the gradient methods.
Theoretically, 
optimizing only the phase is necessary and sufficient for a phase-only hologram;
in other words, parameterization and optimization of the amplitude are not required.
This is because, even if the amplitude is employed as a parameter,
it does not affect the quality of displayed images.
However, 
we believe that exclusively focusing on the phase poses a major barrier to accelerating the gradient descent and reaching the optimal hologram.

On the basis of the aforementioned discussion, we propose a new gradient method, 
where a phase-only hologram is parameterized by not only the phase but also amplitude.
The key idea of our approach is the formulation of a phase-only hologram using an \textit{auxiliary amplitude}.
For optimization,
we transform the phase and amplitude to complex parameters in Cartesian form.
We optimize the parameters using the Wirtinger flow algorithm~\cite{Chen15,Candes15,Zhang17},
a gradient method defined on the basis of the Wirtinger calculus~\cite{Wirtinger27}.
At the early stage of optimization, 
each element of the hologram exists \textbf{inside a complex circle},
and it can take a large gradient while diverging from the origin.
This characteristic contributes to accelerating the gradient descent.
Meanwhile, at the final stage of optimization, each element evolves along a complex circle,
similar to the previous gradient methods.
The experimental results demonstrate that our method outperforms previous methods,
primarily due to the optimization of the amplitude.

\subsection*{Paper organization}
In Section~\ref{s2}, 
we formulate the previous gradient method~\cite{Chakravarthula19,Chakravarthula20,Peng20,Wang22},
hereafter referred to as \textit{Wirtinger flow in polar form~(WFPF)} for convenience.
In Section~\ref{s3}, we propose our method, referred to as \textit{Wirtinger flow in Cartesian form~(WFCF)}, 
and present its mathematical properties.
In Section~\ref{s4}, we report the experimental results,
and in Section~\ref{s5}, we conclude with a brief summary.
Most of the mathematical proofs are presented in \ref{a1} and \ref{a2}.

\subsection*{Notations}
The imaginary unit is denoted by $j$, satisfying $j^2 = -1$.
A scalar variable and a vector variable are denoted by regular typeface and boldface, respectively, 
e.g., $c$ and $\boldsymbol{c}$.
The amplitude and phase of a complex scalar $c$ are denoted by $|c|$ and $\mathop{\mathrm{arg}}(c)$, respectively.
We denote the conjugates of a complex scalar $c$ and a complex vector $\boldsymbol{c}$ by $\bar{c}$ and $\bar{\boldsymbol{c}}$, respectively.
\section{Wirtinger flow in polar form}
\label{s2}
We aim to display a planar image on a screen that is parallel to a hologram~(SLM) plane.
Figure~\ref{fig:coordinate} shows the coordinate system.
\begin{figure}[!h]
\centering
\includegraphics[scale=1.0]{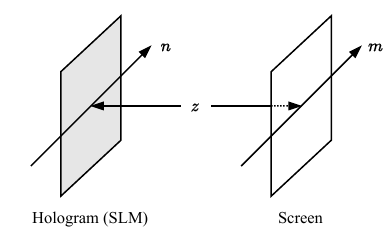}
\caption{Coordinate system.}
\label{fig:coordinate}
\end{figure}
\newline
We denote the distance between the two planes by $z$.
Without loss of generality, 
we consider a 1D hologram and 1D image with the sizes $N$ and $M$, respectively.
All the discussions are extendable to a 2D or 3D setup by increasing the dimensions of the hologram and image.

Let $n \in [0, N)$ be the coordinate on the hologram plane, 
$h_n$ be the $n$-th element of a phase-only hologram, and $\phi_n \in [0, 2\pi)$ be the phase of $h_n$.
The $n$-th element $h_n$ is defined as the following complex number in polar form:
\begin{equation}
h_n(\phi_n) = A\exp(j \phi_n),
\label{eq:hol-p}
\end{equation}
where $A > 0$ is a constant amplitude, not the parameter to be optimized.
We hereafter use the unit amplitude $A=1$.
We can interpret $A\exp(j \phi_n)$ as a complex variable whose length is normalized to unity.
Therefore, we refer to $A\exp(j \phi_n)$ as the \textit{normalized hologram} for convenience.

Let $\boldsymbol{\phi} \in [0, 2\pi)^N$ be a vector formed by stacking all $\phi_n$, 
and $m \in [0,M)$ be the coordinate on the screen plane.
The wavefront propagation between the two planes is modeled as the following Rayleigh-Sommerfeld diffraction~\cite{Goodman05} in a discrete setup:
\begin{equation}
p_m(\boldsymbol{\phi})=\sum_{n=0}^{N-1} w_{mn} h_n(\phi_n),
\label{eq:prop}
\end{equation}
where $w_{mn}$ is a propagation kernel.
For the limit distance $z=\infty$, 
note that the kernel is identical to the Fourier basis function~\cite{Goodman05}.
This characteristic is leveraged in our analysis~(Theorem~\ref{t2} in Section~\ref{s3ss2}).

Let $|i_m|^2$ be the $m$-th light magnitude of an image we aim to display.
In the WFPF,
holographic phase retrieval is defined as the recovery of unknown $\boldsymbol{\phi}$ from the quadratic system
\begin{equation} 
\mathop{\mathrm{find}}_{\boldsymbol{\phi} \in [0, 2\pi)^N} \quad
|i_m|^2 = |p_m(\boldsymbol{\phi})|^2.
\label{eq:prpf}
\end{equation}
Only the magnitudes are perceived by observes and thus considered here.
To solve \eqref{eq:prpf} using the gradient descent~\cite{Boyd04},
it is transformed into the empirical minimization problem
\begin{equation}
\min_{\boldsymbol{\phi} \in [0, 2\pi)^N} \quad
L(\boldsymbol{\phi}) = \sum_{m=0}^{M-1} 
\bigg(|p_m(\boldsymbol{\phi})|^2 - |i_m|^2\bigg)^2.
\label{eq:losspf}
\end{equation}
Let $\partial / \partial \phi_n$ be the Wirtinger derivative operator with respect to $\phi_n$,
see \cite{Chakravarthula19} for more details.
The minimization problem~\eqref{eq:losspf} is solved using the gradient descent
\begin{equation}
\phi_n[\tau + 1] = 
\phi_n[\tau] - \alpha 
\frac{\partial}{\partial \phi_n}L(\boldsymbol{\phi}[\tau]),
\label{eq:WFPF}
\end{equation}
where $\tau$ is the current number of iterations, and $\alpha$ is a learning rate.
We refer to \eqref{eq:WFPF} as the WFPF because the hologram in \eqref{eq:hol-p} is defined in polar form.

Figure~\ref{fig:flow_a} illustrates a geometrical interpretation of the gradient $\partial L / \partial \phi_n$.
Throughout the iterations,
the phase $\phi_n$ is updated while keeping the amplitude constant $A=1$.
In other words, 
each element of the normalized hologram $A\exp(j\phi_n)$ evolves along a complex circle of unit radius,
which is an important characteristic of the WFPF.
We believe that optimizing only the phase is an obstacle to accelerating the gradient descent and reaching the optimal hologram.
\begin{figure}[!t]
\centering
\begin{subfigure}{0.48\textwidth}
\centering
\includegraphics[scale=0.98]{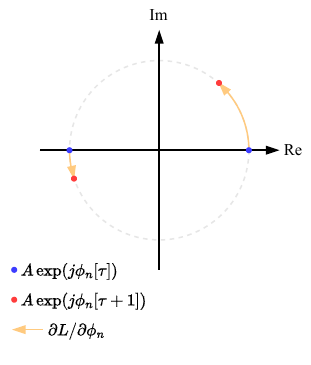}
\caption{WFPF~\cite{Chakravarthula19,Chakravarthula20,Peng20,Wang22}}
\label{fig:flow_a}
\end{subfigure}
\begin{subfigure}{0.48\textwidth}
\centering
\includegraphics[scale=0.98]{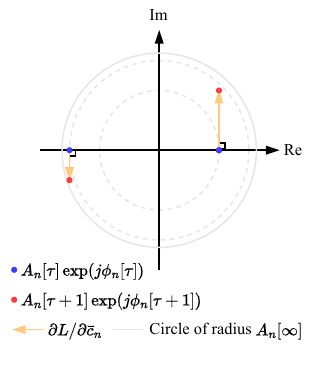}
\caption{WFCF~(ours)}
\label{fig:flow_b}
\end{subfigure}
\caption{Geometry of gradients.}
\label{fig:flow}
\end{figure}
\section{Wirtinger flow in Cartesian form}
\label{s3}
\subsection{Formulation}
\label{s3ss1}
In contrast to the WFPF, 
we parameterize a phase-only hologram by not only the phase but also amplitude.
To this end, we formulate a new hologram using an auxiliary amplitude,
which plays a central role in our method.
Let $A_n > 0$ be the auxiliary amplitude assigned for each $n$, 
which is the parameter to be optimized in our method.
We substitute the following trivial relationships into the previous hologram~\eqref{eq:hol-p}:
\begin{equation}
\left\{\,
\begin{aligned}
A &= 1 = \frac{A_n}{A_n}\\
\exp(j\phi_n) &= 
\frac{\hspace{0.5mm}\exp(j\phi_n)}{|\exp(j\phi_n)|}   
\end{aligned}
\right.,
\end{equation}
where we used $|\exp(j\phi_n)|=1$.
As a result, we can obtain the following new hologram defined in polar form:
\begin{equation}
h_n(A_n,\phi_n) = \frac{A_n\exp(j\phi_n)}{|A_n\exp(j\phi_n)|}.
\label{eq:hol-our}
\end{equation}
Note that $A_n\exp(j\phi_n)$ in \eqref{eq:hol-our} can represent any point on the Gaussian plane,
while $A\exp(j\phi_n)$ in \eqref{eq:hol-p} is always constrained to the complex unit circle.
The difference between the WFPF and WFCF can be visually interpreted as in Fig.~\ref{fig:flow}.
More formal discussion will be presented in Section~\ref{s3ss2}.
In contrast to the normalized hologram, 
we refer to $A_n\exp(j\phi_n)$ as the \textit{non-normalized hologram} as each element likely has a non-unit amplitude.

We optimize the amplitude and phase using the Wirtinger flow algorithm~\cite{Chen15,Candes15,Zhang17},
a gradient method for complex variables defined in the Cartesian domain.
To this end, we transform the polar-form hologram in \eqref{eq:hol-our} to a Cartesian-form one.
Let $a_n$ and $b_n$ be real variables assigned for each $n$, and $c_n=a_n+jb_n$ be a complex variable.
Replacing $A_n $ and $\phi_n$ in \eqref{eq:hol-our} with $|c_n|$ and $\mathop{\mathrm{arg}}(c_n)$, respectively,
we derive the Cartesian-form hologram
\begin{equation}
h_n(c_n)=\frac{c_n}{|c_n|} \quad (|c_n| > 0),
\label{eq:hol-c}
\end{equation}
where we used $c_n = |c_n|\exp(j \mathop{\mathrm{arg}}(c_n))$.
When $c_n$ is zero, the zero division occurs in \eqref{eq:hol-c};
therefore, we must guarantee $|c_n| > 0$.
Note that whereas $c_n$ can take any complex number in the Gaussian plane,
$h_n(c_n)$ is always constrained to the unit circle, just like $h_n(\phi_n)$.
Therefore, $h_n(c_n)$ can be a solution to holographic phase retrieval.

Let $\boldsymbol{c} \in \mathbb{C}^N$ be a complex vector formed by stacking all $c_n$.
Replacing $\phi_n$ and $\boldsymbol{\phi}$ in \eqref{eq:prop}--\eqref{eq:losspf} with $c_n$ and $\boldsymbol{c}$, respectively,
we derive the empirical minimization problem
\begin{equation}
\min_{\boldsymbol{c}\in\mathbb{C}^N} \quad
L(\boldsymbol{c}) = \frac{1}{8N^2}\sum_{m=0}^{M-1} 
\bigg(|p_m(\boldsymbol{c})|^2 - |i_m|^2\bigg)^2.
\label{eq:loss}
\end{equation}
We introduced the numerical constant $1/8N^2$ to simplify our theoretical results in Section~\ref{s3ss2}.
When we implement our method on a computer, the constant is ignored.
We would like to find the minimizer of \eqref{eq:loss} via the gradient descent~\cite{Boyd04},
but the loss function $L(\boldsymbol{c})$ is non-holomorphic~\cite{Lang85} with respect to $c_n$.
We thus employ the Wirtinger calculus~\cite{Wirtinger27},
which enables us to compute the derivative of $L(\boldsymbol{c})$.
\begin{definition}[Wirtinger derivative]
For a scalar function $f(\boldsymbol{c})$ or $f(c_n)$, 
the Wirtinger derivative with respect to $\bar{c}_n$ is defined as follows:
\begin{align}
\frac{\partial}{\partial \bar{c}_n}f &= \frac{1}{2}
\bigg(\frac{\partial}{\partial a_n}f + j \frac{\partial}{\partial b_n}f\bigg).
\end{align}
\end{definition}
\noindent
We find the solution to \eqref{eq:loss} using the following gradient descent, also known as the Wirtinger flow~\cite{Chen15,Candes15,Zhang17}:
\begin{equation}
c_n[\tau + 1] = 
c_n[\tau] - \alpha \frac{\partial}{\partial \bar{c}_n}L(\boldsymbol{c}[\tau]).
\label{eq:WFCF}
\end{equation}
We refer to \eqref{eq:WFCF} as the WFCF because the hologram in \eqref{eq:hol-c} is defined in Cartesian form.
The concrete formula of $\partial L/\partial \bar{c}_n$ is presented in \eqref{eq:grad}.
The formulation of our hologram in \eqref{eq:hol-c} shows that
we must guarantee $|c_n[\tau]| \neq 0$  at any $\tau$, which appears to be challenging.
However, we will show in Remark~\ref{r4} in Section~\ref{s3ss2} that 
$c_n[\tau]$ never degenerates to the zero point,  
regardless of the initial value $c_n[0]$ and the propagation distance $z$.

Algorithm~\ref{alg:WFPF} shows the overall implementation of the WFCF for the limit distance $z=\infty$.
A learning rate and the maximum number of iterations are hyper-parameters,
which will be empirically chosen in Section~\ref{s4}.
The gradients are evaluated by computing  $\mathsf{tmp1}$, $\mathsf{tmp2}$, and $\mathsf{tmp3}$.
These temporal variables are computed via the forward and inverse fast Fourier transformation,
denoted by $\mathsf{fft}$ and $\mathsf{ifft}$, respectively.

\begin{algorithm}[!t]
\caption{Wirtinger flow in Cartesian form for limit distance $z=\infty$}
\label{alg:WFPF}
\begin{algorithmic}
\Require Image $|i_m|^2$, initial hologram $\boldsymbol{c}[0]$, learning rate $\alpha$, and maximum number of iterations $T$
\Ensure $\boldsymbol{c}[T-1]$
\For{$\tau=0$ \textbf{to} $T-1$}
\State $\mathsf{tmp1} = \mathsf{fft}(\boldsymbol{c}[\tau])$
\State $\mathsf{tmp2} = (|\mathsf{tmp1}{}_m|^2-|i_m|^2)\,\mathsf{tmp1}{}_m\bar{c}_n[\tau]/|\bar{c}_n[\tau]|$
\State $\mathsf{tmp3} = \mathsf{ifft}(\mathsf{tmp2})$
\State Compute $\partial L(\boldsymbol{c}[\tau])/\partial \bar{c}_n$ in \eqref{eq:grad}, where imaginary part is $\Im[\mathsf{tmp3}]$.
\State Update $c_n[\tau+1] \leftarrow c_n[\tau] - \alpha \partial L(\boldsymbol{c}[\tau])/\partial \bar{c}_n$.
\EndFor
\end{algorithmic}
\end{algorithm}
\subsection{Geometrical analysis}
\label{s3ss2}
In the WFPF, a phase-only hologram is parameterized by the phase only, 
as reviewed in Section~\ref{s2}.
During the update of $\phi_n$ in accordance with \eqref{eq:WFPF}, 
each element $A\exp(j\phi_n)$ evolves along a complex circle.
In contrast to the WFPF,
our hologram is parameterized by both the phase and auxiliary amplitude,
which is an intuitive difference between the previous gradient methods and ours.
We are now interested in identifying the key factors that enhance the quality of holograms in our method.
In this subsection,
we are dedicated to analyzing how the amplitude and phase of $c_n$, i.e.,
$A_n$ and $\phi_n$, are updated by repeating \eqref{eq:WFCF}.

To address this question, 
we first claim the following statement with respect to the phase.
\begin{theorem}
\label{t1}
The phase of the gradient satisfies
\begin{equation}
\mathop{\mathrm{arg}} \bigg(\frac{\partial}{\partial \bar{c}_n}L(\boldsymbol{c}[\tau])\bigg)=
\mathop{\mathrm{arg}} (c_n[\tau]) \pm \frac{\pi}{2}.
\label{eq:thm1}
\end{equation}
Equation~\eqref{eq:thm1} holds regardless of the initial value $c_n[0]$ and the distance $z$.
\end{theorem}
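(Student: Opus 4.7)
The plan is to compute $\partial L/\partial \bar{c}_n$ in closed form via the Wirtinger chain rule and show the result factors as $j\,\rho\,h_n(c_n)$ with $\rho\in\mathbb{R}$. Because $h_n(c_n)=c_n/|c_n|$ has the same argument as $c_n$, multiplying by $j$ rotates this direction by $\pm\pi/2$, which is exactly \eqref{eq:thm1}. Nothing in the argument will use a specific form of the kernel $w_{mn}$ or the initialization $c_n[0]$, so the conclusion will hold for every distance $z$ and every starting point, as required.

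First I would differentiate the building blocks. Writing $|c_n|=(c_n\bar{c}_n)^{1/2}$, a short Wirtinger computation gives $\partial h_n/\partial\bar{c}_n = -c_n^2/(2|c_n|^3)$ and $\partial \bar{h}_n/\partial\bar{c}_n = 1/(2|c_n|)$. Feeding these into the chain rule for $|p_m|^2 = p_m\bar{p}_m$ and differentiating $L(\boldsymbol{c})$ in \eqref{eq:loss} term by term produces a sum over $m$ in which each summand is the real factor $(|p_m|^2-|i_m|^2)$ multiplied by the complex combination $p_m\bar{w}_{mn} - \bar{p}_m w_{mn} h_n^2$, with an overall positive real prefactor proportional to $1/|c_n|$.

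The decisive step, and the one I expect to be the main obstacle, is recognizing that this bracketed combination is purely imaginary up to multiplication by $h_n$. Using $h_n\bar{h}_n=1$, I would factor $h_n$ out and rewrite the bracket as $h_n\bigl(p_m\bar{w}_{mn}\bar{h}_n - \overline{p_m\bar{w}_{mn}\bar{h}_n}\bigr) = 2j\,h_n\,\mathrm{Im}\!\left(p_m\bar{w}_{mn}\bar{h}_n\right)$. Summing over $m$ and collecting the real-valued coefficients yields
\[
\frac{\partial L}{\partial\bar{c}_n} = \frac{j\,h_n}{4N^2|c_n|}\sum_{m=0}^{M-1}\bigl(|p_m|^2-|i_m|^2\bigr)\,\mathrm{Im}\!\left(p_m\bar{w}_{mn}\bar{h}_n\right),
\]
which is the product of $j$, a real scalar, and a unit-modulus factor $h_n$ with $\mathop{\mathrm{arg}}(h_n)=\mathop{\mathrm{arg}}(c_n[\tau])$. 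The sign of the real sum selects $+\pi/2$ or $-\pi/2$, absorbed by the $\pm$ in \eqref{eq:thm1}; the degenerate case where the sum vanishes corresponds to a stationary point at which $\partial L/\partial\bar{c}_n=0$ and the argument is simply undefined. Once the difference-of-conjugates structure is spotted, everything else is bookkeeping, and independence of $z$ and $c_n[0]$ is automatic since no property of $w_{mn}$ or of the initialization was used.
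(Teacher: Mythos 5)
Your proposal is correct and follows essentially the same route as the paper: the two Wirtinger derivatives you compute are exactly the paper's Lemmas~\ref{l1} and \ref{l2}, and your factoring of the bracket into $2j\,h_n\,\Im(p_m\bar{w}_{mn}\bar{h}_n)$ is the paper's identity \eqref{eq:id} leading to \eqref{eq:grad}. Your explicit remark that the argument is undefined at a stationary point is a small point the paper leaves implicit, but otherwise the two proofs coincide.
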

\begin{proof}
See \ref{a1}.
\end{proof}

\begin{remark}
\label{r1}
Figure~\ref{fig:flow_b} illustrates a geometrical interpretation of this result,
where the amplitude $|c_n[\tau]|$ and phase $\mathop{\mathrm{arg}}(c_n[\tau])$ are denoted by $A_n[\tau]$ and $\phi_n[\tau]$, respectively.
Theorem~\ref{t1} states that the gradient is a tangent vector at $A_n[\tau]\exp(j\phi_n[\tau])$ on a complex circle.
As shown in the left sides of Figs.~\ref{fig:flow_a} and \ref{fig:flow_b},
if the amplitudes of the gradients are sufficiently small, 
$\partial L/ \partial \bar{c}_n$ aligns with the linearly-approximated version of the directed arc $\partial L/\partial \phi_n$.
On the other hand, as shown in the right sides of Figs.~\ref{fig:flow_a} and \ref{fig:flow_b},
if the gradients have large amplitudes, $\partial L/ \partial \bar{c}_n$ indicates the different direction from that of  $\partial L/\partial \phi_n$.
Therefore, only if the amplitudes of the gradients are sufficiently small, 
the directions of both the gradients are almost parallel.
In other words, under the same condition,
the non-normalized hologram $A_n \exp(j\phi_n)$ evolves along a complex circle, 
just like the normalized hologram $A\exp(j\phi_n)$.
\end{remark}

Next, we show that the amplitude of the gradient is actually small under mild conditions.
We present the following two results with respect to the amplitude.
\begin{theorem}
\label{t2}
Suppose that there exists a complex vector $\boldsymbol{C} \in \mathbb{C}^N$ satisfying $|i_m|^2 = |p_m(\boldsymbol{C})|^2$, 
and that the distance between the hologram and screen planes is infinite, $z=\infty$.
Under these assumptions, the following inequality always holds:
\begin{equation}
\bigg|\frac{\partial}{\partial \bar{c}_n}L(\boldsymbol{c}[\tau])\bigg| 
\leq \frac{1}{|c_n[\tau]|}.
\label{eq:thm2}
\end{equation}
Equation~\eqref{eq:thm2}  holds regardless of the initial value $c_n[0]$.
\end{theorem}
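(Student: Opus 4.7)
The plan is to expand $\partial L/\partial \bar c_n$ via the Wirtinger chain rule, peel off a factor of $1/|c_n|$ from the formula, and then use Parseval's identity together with the feasibility hypothesis on $\boldsymbol{C}$ to bound what remains by a constant that cancels the prefactor $1/(8N^2)$ in the loss.

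First I would differentiate $h_n(c_n) = c_n/|c_n|$: using $\partial |c_n|/\partial \bar c_n = c_n/(2|c_n|)$, one gets $\partial h_n/\partial \bar c_n = -c_n^2/(2|c_n|^3)$ and $\partial \bar h_n/\partial \bar c_n = 1/(2|c_n|)$. Chaining through $p_m = \sum_k w_{mk}\,h_k$ and $|p_m|^2 = p_m \bar p_m$ then gives $\partial|p_m|^2/\partial \bar c_n = (2|c_n|)^{-1}(\bar w_{mn}\, p_m - w_{mn}\, \bar p_m\, h_n^2)$, so that
\[
\frac{\partial L}{\partial \bar c_n} = \frac{1}{8N^2|c_n|}\sum_m (|p_m|^2-|i_m|^2)\,(\bar w_{mn}\, p_m - w_{mn}\, \bar p_m\, h_n^2).
\]
The factor $1/|c_n|$ claimed by the theorem is already visible; what remains is to show that the sum is dominated in absolute value by $8N^2$.

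For the second step, I would take magnitudes, use $|h_n|=1$, and apply the triangle inequality to dominate the bracketed quantity by $2|w_{mn}||p_m|$, reducing the task to bounding $(4N^2)^{-1}\sum_m ||p_m|^2-|i_m|^2|\,|w_{mn}|\,|p_m|$ by $1$. Here both hypotheses become essential. The assumption $z=\infty$ makes $w_{mn}$ a Fourier-basis element of constant modulus, so Parseval's theorem yields a uniform bound on $\sum_m|p_m|^2$ in terms of $\sum_n|h_n|^2=N$. The assumption that some $\boldsymbol{C}$ satisfies $|i_m|^2 = |p_m(\boldsymbol{C})|^2$, combined with $|h_n(\boldsymbol{C})|=1$, gives the analogous bound on $\sum_m|i_m|^2$ via the same Parseval step. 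A Cauchy--Schwarz application, followed by $(|p_m|^2-|i_m|^2)^2 \leq 2(|p_m|^4+|i_m|^4)$ and $\sum_m|p_m|^4 \leq (\max_m|p_m|^2)\sum_m|p_m|^2$ with $\max_m|p_m|$ controlled by the triangle inequality over $n$, then collects the right powers of $N$; the prefactor $1/(8N^2)$ is engineered precisely to cancel them.

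The hard part will be bookkeeping the Fourier normalization so that every power of $N$ cancels and the right-hand side comes out to exactly $1/|c_n|$ rather than some $N$-dependent fraction of it. A related subtlety is that feasibility of $\boldsymbol{C}$, not merely the nonnegativity of $|i_m|^2$, is indispensable: without it the sum $\sum_m|i_m|^4$ could not be controlled and the bound would break. The hypothesis $z=\infty$ is likewise load-bearing, since a non-Fourier kernel would invalidate Parseval and with it the uniform control on $\sum_m|p_m|^2$.
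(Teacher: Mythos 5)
Your first step reproduces the paper's computation exactly: the two Wirtinger derivatives $\partial h_n/\partial\bar c_n=-c_n^2/(2|c_n|^3)$ and $\partial\bar h_n/\partial\bar c_n=1/(2|c_n|)$ are the paper's Lemmas~\ref{l1} and \ref{l2}, and your expression for $\partial L/\partial\bar c_n$ matches \eqref{eq:grad} (up to the factor-of-two bookkeeping in \eqref{eq:g}). The divergence comes immediately after, and it is where the gap lies. You bound the bracket termwise by $2|w_{mn}|\,|p_m|$ and move the absolute value inside the sum over $m$, reducing the problem to bounding $\sum_m\bigl||p_m|^2-|i_m|^2\bigr|\,|w_{mn}|\,|p_m|$ via Parseval, Cauchy--Schwarz, and a sup bound on $|p_m|$. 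The paper instead keeps $\sum_m(|p_m|^2-|i_m|^2)\,p_m\bar w_{mn}$ intact as a single complex quantity, identifies it via Wiener--Khinchin and the convolution theorem with $(\bar{\boldsymbol h}*\boldsymbol h*\boldsymbol h)_n-(\bar{\boldsymbol H}*\boldsymbol H*\boldsymbol h)_n$, and bounds each triple convolution by Young's inequality, $\|\bar{\boldsymbol h}*\boldsymbol h*\boldsymbol h\|_\infty\le\|\bar{\boldsymbol h}\|_1\|\boldsymbol h\|_2\|\boldsymbol h\|_2=N^2$. The point of that detour into the spatial domain is that it exploits $|h_n|\equiv 1$ and $|H_n|\equiv 1$ \emph{pointwise} (giving $\|\boldsymbol h\|_1=N$, $\|\boldsymbol h\|_2^2=N$), whereas Parseval only sees $\|\boldsymbol h\|_2$.

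That difference costs you a factor of $N$, and the "hard part" you defer --- making the powers of $N$ cancel --- is precisely where your route fails. Calibrate against the target the paper needs: with the prefactor $1/(2N^2)$ in front of the sum, the theorem requires $\bigl|\sum_m(|p_m|^2-|i_m|^2)p_m\bar w_{mn}\bigr|\le 2N^2$, which is what Young's inequality delivers. Your chain gives, with $|w_{mn}|=1$, $\sum_m\bigl||p_m|^2-|i_m|^2\bigr|\,|p_m|\le(\max_m|p_m|)\bigl(\sum_m|p_m|^2+\sum_m|i_m|^2\bigr)\le N\cdot 2N^2=2N^3$ (Cauchy--Schwarz plus $(x-y)^2\le 2(x^2+y^2)$ lands in the same place), i.e.\ a final bound of order $N/|c_n|$ rather than $1/|c_n|$. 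This is not merely loose estimation on your part: for $\boldsymbol h\equiv\boldsymbol 1$ the spectrum concentrates on $m=0$ and the termwise-absolute-value quantity genuinely reaches order $N^3$, so no refinement of the same chain recovers the needed $N^2$. A secondary point: you invoke the feasibility of $\boldsymbol C$ only to control the total energy $\sum_m|i_m|^2$; the paper needs the stronger consequence that $|i_m|^2$ is the power spectrum of a unit-modulus vector $\boldsymbol H$, so that the cross term is again a triple convolution $\bar{\boldsymbol H}*\boldsymbol H*\boldsymbol h$ amenable to Young's inequality. To close the proof you should restore the convolution-theorem step rather than push absolute values inside the sum over $m$.
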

\begin{proof}
See \ref{a2}.
\end{proof}
\begin{corollary}[induced from Theorem~\ref{t1}]
\label{c1}
The amplitude $|c_{n}[\tau]|$ monotonically increases:
\begin{equation}
|c_{n}[\tau+1]| \geq |c_{n}[\tau]|.
\label{eq:mono}
\end{equation}
The equality holds if and only if $\partial L(\boldsymbol{c}[\tau])/\partial \bar{c}_n=0$.
Equation~\eqref{eq:mono} holds regardless of the initial value $c_n[0]$ and the distance $z$.
\end{corollary}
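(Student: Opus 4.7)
The plan is to read Corollary~\ref{c1} as an immediate geometric consequence of Theorem~\ref{t1}: since the Wirtinger gradient is orthogonal to $c_n[\tau]$ when viewed as a vector in $\mathbb{R}^2$, subtracting a scalar multiple of it from $c_n[\tau]$ can only lengthen $c_n[\tau]$, by the Pythagorean theorem.

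Concretely, I would proceed as follows. First, I would abbreviate $c = c_n[\tau]$ and $g = \partial L(\boldsymbol{c}[\tau])/\partial \bar{c}_n$, so the update rule~\eqref{eq:WFCF} reads $c_n[\tau+1] = c - \alpha g$. Then I would expand
\begin{equation*}
|c - \alpha g|^2 = (c-\alpha g)\overline{(c-\alpha g)} = |c|^2 - 2\alpha\,\mathrm{Re}(c\bar{g}) + \alpha^2 |g|^2.
\end{equation*}
The crucial step is to show that the cross term vanishes. By Theorem~\ref{t1}, $\mathop{\mathrm{arg}}(g) = \mathop{\mathrm{arg}}(c) \pm \pi/2$, so $\mathop{\mathrm{arg}}(c\bar{g}) = \mathop{\mathrm{arg}}(c) - \mathop{\mathrm{arg}}(g) = \mp \pi/2$, which makes $c\bar{g}$ purely imaginary and therefore $\mathrm{Re}(c\bar{g}) = 0$. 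Substituting back yields
\begin{equation*}
|c_n[\tau+1]|^2 = |c_n[\tau]|^2 + \alpha^2 |g|^2 \geq |c_n[\tau]|^2,
\end{equation*}
from which \eqref{eq:mono} follows by taking square roots (using $\alpha > 0$).

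For the equality case, $|c_n[\tau+1]| = |c_n[\tau]|$ iff $\alpha^2|g|^2 = 0$, i.e.\ iff $\partial L(\boldsymbol{c}[\tau])/\partial \bar{c}_n = 0$, which is exactly the claim. Finally, the hypothesis-independence is inherited from Theorem~\ref{t1}: because the orthogonality between $c_n[\tau]$ and $g$ holds for every choice of $c_n[0]$ and every propagation distance $z$, so does \eqref{eq:mono}.

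There is essentially no hard step here; the entire content of the corollary is the perpendicularity furnished by Theorem~\ref{t1}, and the only mild subtlety is to recognise that the Euclidean inner product of two complex numbers $u, v$ (viewed as $2$-vectors) equals $\mathrm{Re}(u\bar{v})$, so that a $\pm\pi/2$ phase offset translates directly into vanishing of the cross term.
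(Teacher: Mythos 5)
Your proof is correct and is essentially the paper's own argument: the paper likewise expands $|c_n[\tau]-\nabla_n[\tau]|^2$ and kills the cross term $\bar{c}_n[\tau]\nabla_n[\tau]+c_n[\tau]\bar{\nabla}_n[\tau]=2\,\mathrm{Re}(c_n[\tau]\bar{\nabla}_n[\tau])$ by invoking the perpendicularity from Theorem~\ref{t1}, arriving at $|c_n[\tau+1]|^2=|c_n[\tau]|^2+|\nabla_n[\tau]|^2$. The only cosmetic difference is that the paper absorbs the learning rate $\alpha$ into the gradient term before expanding.
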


\begin{proof}
We define $\nabla_n[\tau] = \alpha \partial L(\boldsymbol{c}[\tau])/\partial \bar{c}_n$ for notation convenience.
We compute the magnitude $|c_n[\tau+1]|^2$ as follows:
\begin{align}
|c_n[\tau+1]|^2 &= |c_n[\tau] - \nabla_n[\tau]|^2\\
&= |c_n[\tau]|^2 - \bar{c}_n[\tau]\nabla_n[\tau] 
- c_n[\tau]\bar{\nabla}_n[\tau] + |\nabla_n[\tau]|^2
\label{eq:proof_}\\
&= |c_n[\tau]|^2 + |\nabla_n[\tau]|^2
\label{eq:proof__}\\
&\geq |c_n[\tau]|^2.
\end{align}
We applied a variant of Theorem~\ref{t1}, i.e.,  $c_n[\tau] \bar{\nabla}_n[\tau] + \bar{c}_n[\tau] \nabla_n[\tau]=0$, to \eqref{eq:proof_}.
%
%
%
%
\end{proof}

\noindent
%
\begin{remark}
\label{r2}
Corollary~\ref{c1} claims that when $c_n[\tau]$ evolved enough, 
i.e., $\tau$ increased extremely, its amplitude $|A_n[\tau]|$ becomes sufficiently large.
With this characteristic in mind, we can interpret Theorem~\ref{t2} as follows: at a large $\tau$,
the reciprocal $1/|A_n[\tau]|$ can be sufficiently small, 
and the amplitude of the gradient is even less than $1/|A_n[\tau]|$.
\end{remark}
\begin{remark}
\label{r3}
In contrast to Remark~\ref{r2}, 
Theorem~\ref{t2} indicates that when $\tau$ is small, the amplitude of the gradient can be  large.
\end{remark}

\noindent
In accordance with Remark~\ref{r2}, 
the amplitude of the gradient can be sufficiently small at a large $\tau$;
each element $A_n[\tau] \exp(j\phi_n[\tau])$ thus evolves along a complex circle.
More precisely,
the amplitude $A_n[\tau]$ that was updated sufficient times gradually converges to its limit $A_n[\infty]$,
and each element approximately evolves along a circle of the limit radius $A_n[\infty]$,
as illustrated in the left side of Fig.~\ref{fig:flow_b}.
On the other hand,
Remark~\ref{r3} implies that $A_n[\tau]$ can drastically increase at a small $\tau$.
In other words,
each element $A_n[\tau] \exp(j\phi_n[\tau])$ is rapidly updated inside a circle of the limit radius $A_n[\infty]$,
as illustrated in the right side of Fig.~\ref{fig:flow_b}.
We conjecture that this update rule at a small $\tau$ accelerates the convergence towards the optimal hologram.

We finally present the supplementary remarks on Theorem~\ref{t2} and Corollary~\ref{c1}.

\begin{remark}
\label{r5}
The assumption $z=\infty$ in Theorem~\ref{t2} is intuitively impractical for an optical implementation of holography.
However, we can perceive the displayed image using Fourier relay lenses~\cite{Goodman05}.
\end{remark}

\begin{remark}
\label{r4}
Corollary~\ref{c1} states that for any $c_n[\tau]$,  since $|c_n[\tau]|$ is non-decreasing, 
$c_n[\tau]$ never degenerates to the zero point. 
\end{remark}

\section{Experimental results}
\label{s4}
\subsection{Comparisons with previous methods}
\label{s4ss1}
We conducted all the experiments on the computing environment in Table~\ref{tab:comp}.
The 8-bit gray-scale images in Figs.~\ref{fig:dpr_a} and \ref{fig:fmc_a} were used as planar images $|i_m|^2$.
These images were generated by cropping $1920 \times 1080$ regions of the original images in \cite{Bychkovsky11}.
The distance between the image and hologram planes was fixed to $z = \infty$, as assumed in Theorem~\ref{t2}.
We used the learning rate $\alpha=10^{-2}$ and $10^{-3}$.
The initial hologram $c_n[0]$ was drawn from a zero-mean complex Gaussian distribution with the variance 0.01.
\begin{table}[!t]
\caption{Computing environment.}
\vspace{-5mm}
\label{tab:comp}
\begin{center}
\footnotesize
\renewcommand{\arraystretch}{1.3}
\begin{tabular}{c|c}  \hline
GPU                     & NVIDIA GeForce RTX 4090\\ \hline
GPU memory   & 24~GB \\ \hline
OS          & Ubuntu 20.04 LTS \\ \hline
Language \& framework & Python 3.10.13 \& PyTorch 2.2.1\\\hline
\end{tabular}
\end{center}
\end{table}

We compared our method with three previous methods~\cite{Gerchberg72,Guendy21,Chakravarthula19},
which we call the GS method, Kaczmarz method, and WFPF, respectively.
These previous methods are known as the de facto standard for holographic phase retrieval, i.e.,
computation of phase-only holograms.
To verify the effectiveness of our method,
we evaluated the quality of displayed images using the peak signal-to-noise ratio (PSNR)~[dB].
For the GS and Kaczmarz methods,
we used $c_n[0]$ as initial estimates of phase-only holograms.
For the WFPF, we initialized $\phi_n[0]=\mathop{\mathrm{arg}}(c_n[0])$;
the initial holograms $A\exp(j\phi_n[0])$ and $A_n[0]\exp(j\phi_n[0])$ share the same phase,
but have different amplitudes.
We believe that this initialization scheme is suitable for investigating the effectiveness of the auxiliary amplitude $A_n$.
For the WFPF, we used the fixed learning rate $\alpha=10^{-2}$ and $10^{-3}$, as in our method.
We also implemented the WFPF using the Adam optimizer~\cite{Kingma15},
which adaptively changes the learning rate at each iteration $\tau$ and accelerates the gradient descent. 
We used the initial learning rate $10^{-2}$.
All the methods were executed by the GPU in Table~\ref{tab:comp}.

Figures~\ref{fig:dpr} and \ref{fig:fmc} show the displayed images and PSNR values in the simulation.
We used the number of iterations $\tau=10^3$ for all the methods.
The PSNR values ranged as follows: 
16.70--23.28 in the GS method,
15.82--20.04 in the Kaczmarz method,
17.23--20.17 in the WFPF with $\alpha = 10^{-2}$, and
19.47--25.00 in the WFPF with Adam.
Our method with $\alpha=10^{-2}$ presented PSNR values higher than those achieved by the previous methods,
which ranged from \textbf{19.90} to \textbf{25.65}.
The PSNR values of the WFPF and WFCF with $\alpha=10^{-3}$ were in the ranges 12.07--14.78 and 18.81--23.17,
respectively; the WFCF outperforms the WFPF despite employing the same learning rate.
In the colored regions in Figs.~\ref{fig:dpr} and \ref{fig:fmc},
the previous methods presented noisy results, 
which are obstacles for perceiving the textures and edges of objects.
On the other hand, noises were sufficiently suppressed in our results:
especially, facial lines (blue region) of \textit{Men}.

We optically displayed the \textit{Men} image by a holographic display.
Figure~\ref{fig:imp} shows the experimental system, 
where we used a Thorlabs EXULUS-HD1 phase-only SLM and a Point Grey GS3-U3-28S4C-C camera.
The parameters of our setup are listed in Table~\ref{tab:param}.
As shown in Fig.~\ref{fig:disp_a}, 
the WFPF with $\alpha=10^{-3}$ resulted in a noisy image, similar to the simulated one in Fig.~\ref{fig:fmc_e}.
In Fig.~\ref{fig:disp_b}, 
the WFPF with $\alpha=10^{-2}$ resulted in a less noisy but somewhat blurry image; 
see the face of the left man~(blue region).
Figure~\ref{fig:disp_c} shows our result with $\alpha=10^{-2}$,
where the face of the left man was clearer compared to those in Figs.~\ref{fig:disp_a} and \ref{fig:disp_b}.
These results support the effectiveness our method for real-world hardware implementation.

Figure~\ref{fig:psnr} shows PSNR values against the number of iterations $\tau$.
In Section~\ref{s3ss2},
we conjectured that the convergence of the non-normalized hologram is accelerated at a small $\tau$.
Figure~\ref{fig:psnr} shows that the WFCF with $\alpha=10^{-2}$ drastically increased PSNR values at a small $\tau$,
which demonstrates a higher-speed convergence compared with the WFPF.
Due to this characteristic, for most $\tau$ values, 
the WFCF presented PSNR values higher than those achieved by the previous methods.
Note that the WFPF using fixed learning rates gradually increased PSNR values.
This is because the WFPF evolves the normalized hologram along the unit complex circle, 
which leads to the slow convergence of the gradient descent.
These findings provide empirical supports for our conjecture regarding the acceleration.

The execution times for each iteration were as follows: 
2.64~[ms] for the GS method, 
3.08~[ms] for the Kaczmarz method,
2.68~[ms] for the WFPF with $\alpha=10^{-2}$,
2.80~[ms] for the WFPF with Adam, and
2.91~[ms] for the WFCF with $\alpha=10^{-2}$.
The differences in per-iteration complexity were negligible.
We also mention the execution times for achieving the PSNR value $25.0$~[dB] in Fig.~\ref{fig:psnr_a}:
48.77~[s] for the WFPF with $\alpha=10^{-2}$,
3.13~[s] for the WFPF with Adam, and
\textbf{1.80}~[s] for the WFCF with $\alpha=10^{-2}$.
We can observe that our method achieves the target PSNR with the smallest execution time, demonstrating its effectiveness from a computational complexity perspective.

\begin{remark}
It is notable that the PSNR values of the WFPF with Adam were comparable to those of our method.
In the field of machine learning, 
the rapid convergence of the Adam-based gradient method for certain classes of functions,
such as the non-convex loss in \eqref{eq:losspf}, is mysterious~\cite{Defossez22}, 
and its investigation is a challenging task.
We believe that our theoretical and experimental results can be helpful to address this challenge,
bridging the gap between holography and machine learning.
\end{remark}

\begin{remark}
The results presented in Figs.~\ref{fig:dpr} and \ref{fig:fmc} were degraded by salt-and-pepper noises,
which are probably due to the propagation distance $z$ and propagation model.
When $z \ll \infty$, i.e., in the case with the near-field propagation, 
the propagation equation in \eqref{eq:prop} can be regarded as Fresnel diffraction,
convolution of a Fresnel kernel and a hologram.
As was done in many works, e.g., \cite{Chakravarthula19},
the Fresnel diffraction is commonly implemented by the band-limited angular spectrum method, 
pioneered in \cite{Matsushima09}.
In the near-field propagation,
the band-limited angular spectrum method nullifies high-frequency components of the Fresnel kernel;
the Fresnel kernel can be regarded as a low-pass filter.
Therefore, only low-frequency components of holograms are propagated by the Fresnel diffraction,
whereas high-frequency noises, such as salt-and-pepper ones, 
can be removed, see \cite{Chakravarthula19} in the case $z=200$~[mm].
In contrast, when $z =\infty$, i.e., in the case with the far-field propagation,
the propagation equation is identical to the Fourier transformation.
In this case, all frequency components are uniformly propagated;
in other words, high-frequency components are propagated as much as low-frequency components.
Therefore, the results presented in Figs.~\ref{fig:dpr} and \ref{fig:fmc} reproduced sharp textures including facial lines of Men, as well as noises.
With the above discussion in mind,
we argue that obtaining high-PSNR values can be easy in the near-field propagation, as demonstrated in \cite{Chakravarthula19},
while it is challenging in the far-field propagation.
\end{remark}

\subsection{Numerical support for theoretical results}
\label{s4ss2}
To support the theoretical results in Section~\ref{s3ss2}, 
we evaluated the amplitudes and phases of $c_n[\tau]$ and $\partial L(\boldsymbol{c}[\tau])/\partial \bar{c}_n$.
We used the following synthetic image as $|i_m|^2$:
the non-normalized hologram $\boldsymbol{C}$ in Theorem~\ref{t2} was drawn from a zero-mean complex Gaussian distribution with the variance 0.04,
and $|i_m|^2$ was computed in accordance with the propagation equation $i_m=p_m(\boldsymbol{C})$.
This enables the synthetic image $|i_m|^2$ to satisfy the existence of  $\boldsymbol{C}$, 
which is a required condition for Theorem~\ref{t2}.
We also used the \textit{Rock} image in Fig.~\ref{fig:dpr_a} as $|i_m|^2$,
for which the existence of $\boldsymbol{C}$ is not guaranteed and remains uncertain.

Figure~\ref{fig:thorem_a} shows trajectories of $c_n[\tau]$ in the Gaussian plane,
where values beside the color bar represent the number of iterations $\tau$,
and the orange arrows represent the gradient $\partial L(\boldsymbol{c}[\tau])/\partial \bar{c}_n$.
We also illustrated complex circles of the limit radii $|c_n[\infty]|$  for each $n$, 
where we used $\tau=10^4$ as the infinity.
As can be seen, $c_n[\tau]$ was rapidly updated inside the corresponding complex circle at a small $\tau$,
whereas it evolved along a complex circles at a large $\tau$.
Figure~\ref{fig:thorem_b} shows the amplitude of the gradient, 
as well as its theoretical upper bound $1/|c_n[\tau]|$.
We can observe that $|\partial L(\boldsymbol{c}[\tau])/\partial \bar{c}_n|$ was significantly large at a small $\tau$. 
Figure~\ref{fig:thorem_c} shows the amplitudes of the interior points $c_n[\tau]$ in Fig.~\ref{fig:thorem_a},
where the red and blue lines represent $|c_n[\tau]|$ and its limit $|c_n[\infty]|$, respectively.
The amplitudes $|c_n[\tau]|$ drastically increased at a small $\tau$, and they converged to the limit,
as shown in Corollary~\ref{c1}.
These aforementioned results support the high-speed convergence of our method.
As can be seen from Fig.~\ref{fig:thorem_b},
the amplitude $|\partial L(\boldsymbol{c}[\tau])/\partial \bar{c}_n|$ was always lower than $1/|c_n[\tau]|$, as shown in Theorem~\ref{t2}.
Note that the difference between $1/|c_n[\tau]|$ and $|\partial L(\boldsymbol{c}[\tau])/\partial \bar{c}_n|$ was significant.
This is because we computed the bound $1/|c_n[\tau]|$ under the worst scenario, 
which was sufficiently large for our experimental setup.
Figure~\ref{fig:thorem_d} shows the phase $\mathrm{arg}(\partial L(\boldsymbol{c}[\tau])/\partial \bar{c}_n)-\mathrm{arg}(c_n[\tau])$ for a specific $n$.
Obviously, the phases are identical to $\pm\pi/2$, as shown in Theorem~\ref{t1}.
In accordance with all the results,
we conclude that even if the existence of $\boldsymbol{C}$ is uncertain,
our method exhibits consistent and theoretical behavior.

\section{Conclusion}
\label{s5}
In this work, we addressed a holographic phase retrieval problem:
optimization of a phase-only hologram while keeping the amplitude constant. 
We extended a phase-only hologram by introducing the auxiliary amplitude.
We also defined a gradient method for optimizing the phase and amplitude,
which we called the Wirtinger flow in Cartesian form~(WFCF).
At the early stage of optimization, 
each element of the hologram exists inside a complex circle,
and it can take a large gradient while diverging from the origin.
This characteristic contributed to accelerating the gradient descent.
Comparing the WFCF with previous methods, 
we demonstrated its effectiveness in simulation and optical reproduction.
%
%
We hope our methodology contributes to further advancements in the field of holography.




\bibliography{sample}

\newpage
\begin{figure}[!t]
\centering
\begin{subfigure}{0.996\textwidth}
\centering
\includegraphics[width=0.29\textwidth]{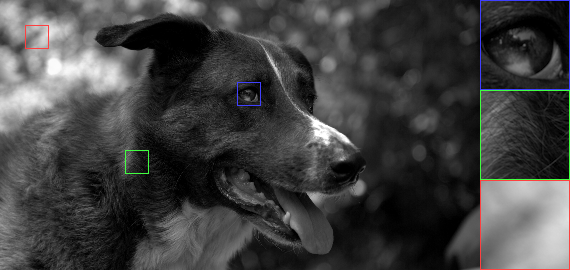}
\hfill
\includegraphics[width=0.29\textwidth]{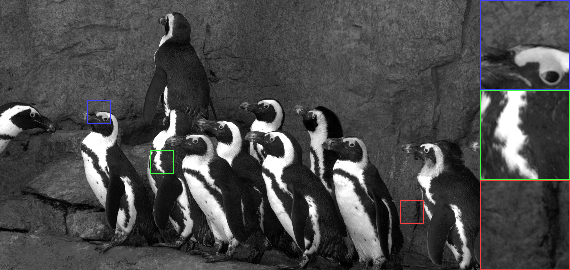}
\hfill
\includegraphics[width=0.29\textwidth]{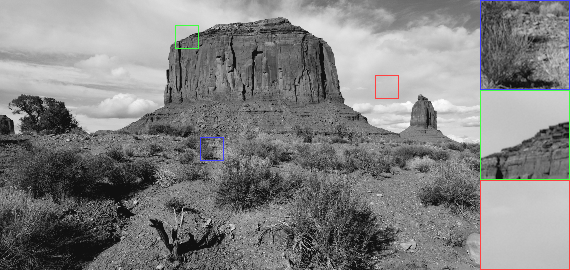}
\caption{Original images~(\textit{Dog}, \textit{Penguins}, and \textit{Rock})}
\label{fig:dpr_a}
\vspace{1.7mm}
\end{subfigure}
\begin{subfigure}{0.996\textwidth}
\centering
\includegraphics[width=0.29\textwidth]{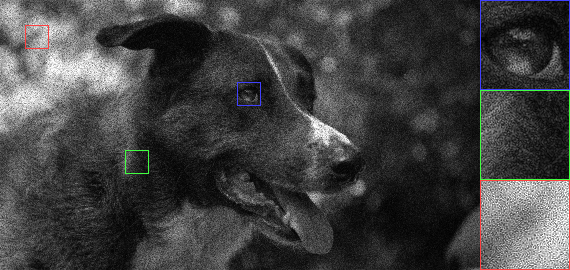}
\hfill
\includegraphics[width=0.29\textwidth]{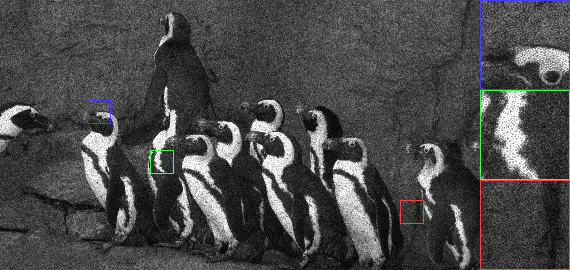}
\hfill
\includegraphics[width=0.29\textwidth]{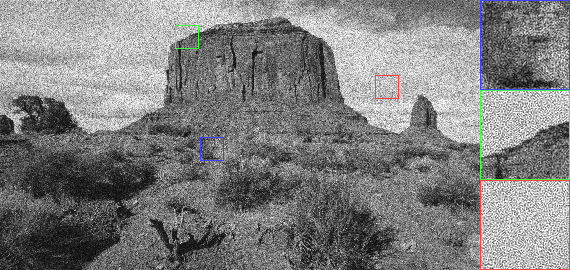}
\caption{Gerchberg-Saxton method~\cite{Gerchberg72}~(22.59, 21.40, and 17.53)}
\vspace{1.7mm}
\end{subfigure}
\begin{subfigure}{0.996\textwidth}
\centering
\includegraphics[width=0.29\textwidth]{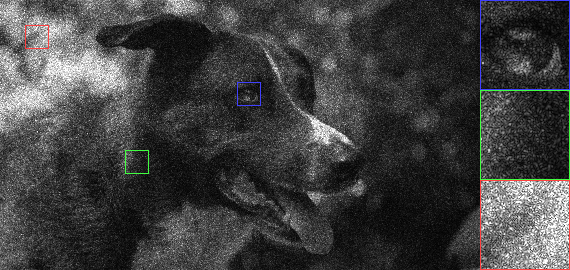}
\hfill
\includegraphics[width=0.29\textwidth]{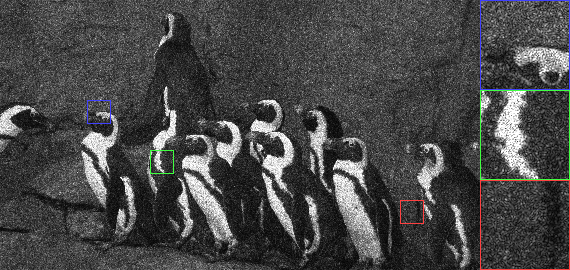}
\hfill
\includegraphics[width=0.29\textwidth]{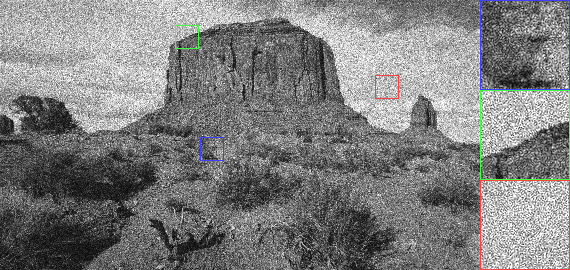}
\caption{Kaczmarz method~\cite{Guendy21}~(19.85, 19.31, and 16.47)}
\vspace{1.7mm}
\end{subfigure}
\begin{subfigure}{0.996\textwidth}
\centering
\includegraphics[width=0.29\textwidth]{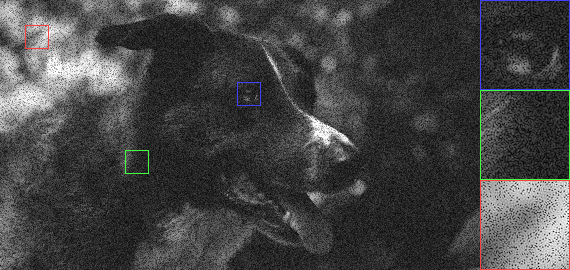}
\hfill
\includegraphics[width=0.29\textwidth]{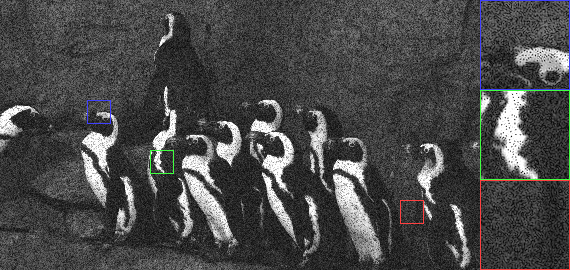}
\hfill
\includegraphics[width=0.29\textwidth]{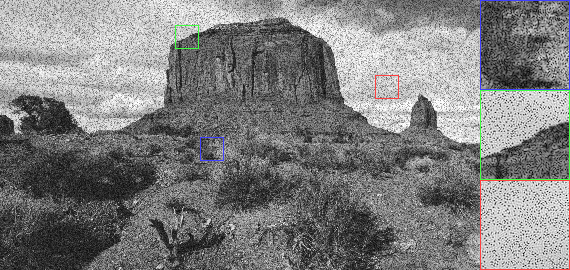}
\caption{WFPF~\cite{Chakravarthula19} w/ $\alpha=10^{-2}$~(20.17, 20.16, and 17.75)}
\vspace{1.7mm}
\end{subfigure}
\begin{subfigure}{0.996\textwidth}
\centering
\includegraphics[width=0.29\textwidth]{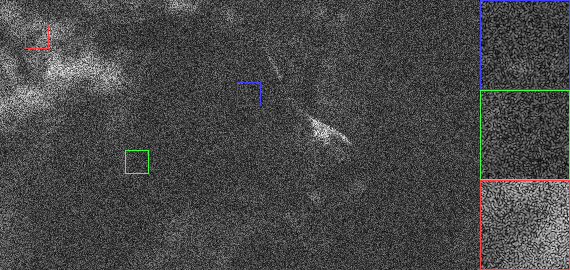}
\hfill
\includegraphics[width=0.29\textwidth]{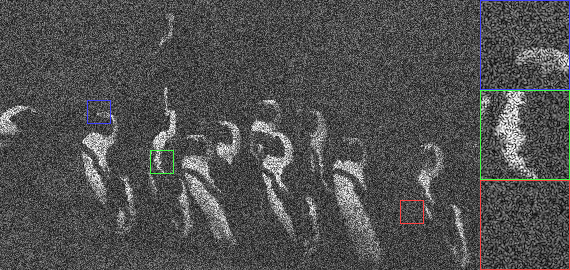}
\hfill
\includegraphics[width=0.29\textwidth]{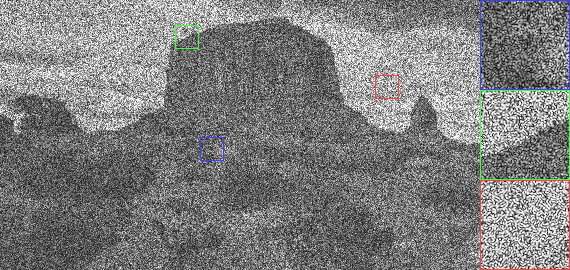}
\caption{WFPF~\cite{Chakravarthula19} w/ $\alpha=10^{-3}$~(14.76, 14.78, and 12.68)}
\vspace{1.7mm}
\end{subfigure}
\begin{subfigure}{0.996\textwidth}
\centering
\includegraphics[width=0.29\textwidth]{img0_PF_2.png}
\hfill
\includegraphics[width=0.29\textwidth]{img1_PF_2.png}
\hfill
\includegraphics[width=0.29\textwidth]{img2_PF_2.png}
\caption{WFPF~\cite{Chakravarthula19} w/ Adam~(24.83, 24.08, and 20.80)}
\vspace{1.7mm}
\end{subfigure}
\begin{subfigure}{0.996\textwidth}
\centering
\includegraphics[width=0.29\textwidth]{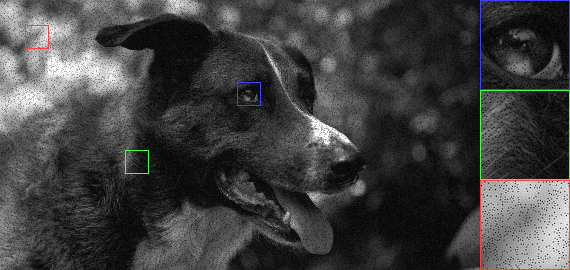}
\hfill
\includegraphics[width=0.29\textwidth]{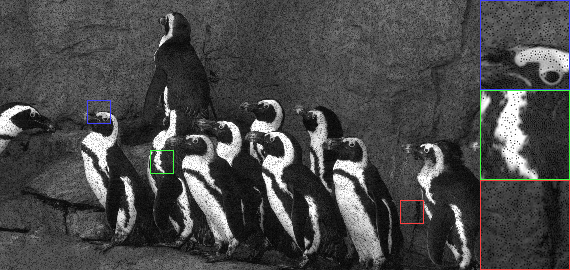}
\hfill
\includegraphics[width=0.29\textwidth]{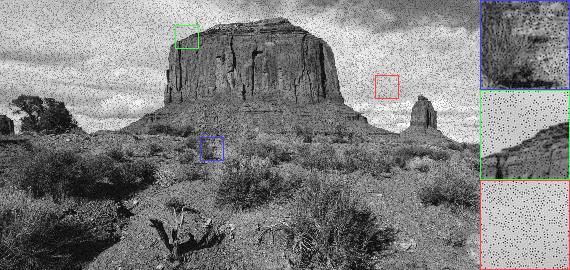}
\caption{WFCF~(ours) w/ $\alpha=10^{-2}$~(\textbf{25.33}, \textbf{24.59}, and \textbf{21.09})}
\vspace{1.7mm}
\end{subfigure}
\begin{subfigure}{0.996\textwidth}
\centering
\includegraphics[width=0.29\textwidth]{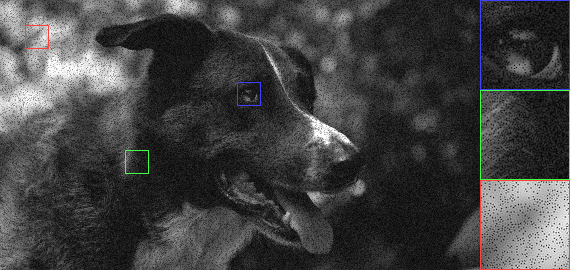}
\hfill
\includegraphics[width=0.29\textwidth]{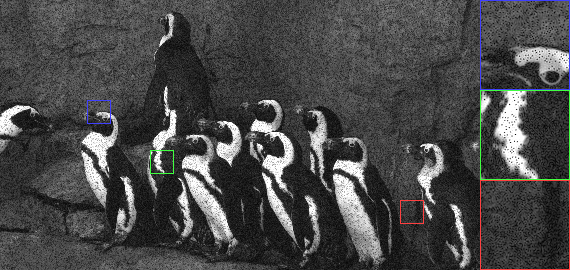}
\hfill
\includegraphics[width=0.29\textwidth]{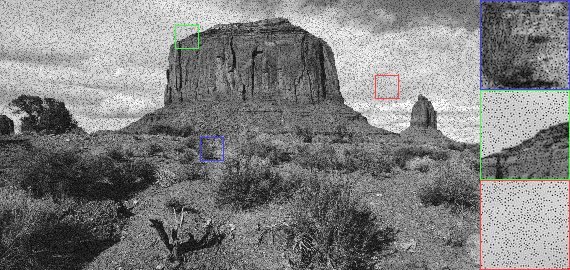}
\caption{WFCF~(ours) w/ $\alpha=10^{-3}$~(23.17, 22.71, and 19.86)}
\end{subfigure}
\caption{Displayed images and PSNR values in simulation~(\textit{Dog}, \textit{Penguins}, and \textit{Rock}).}
\label{fig:dpr}
\end{figure}

\newpage
\begin{figure}[!t]
\centering
\begin{subfigure}{0.996\textwidth}
\centering
\includegraphics[width=0.29\textwidth]{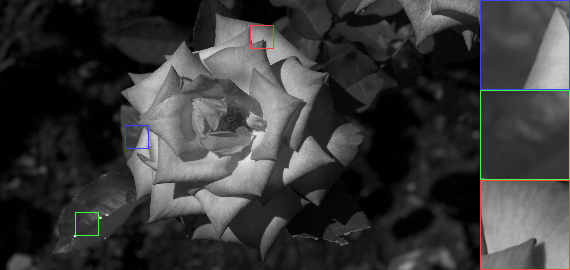}
\hfill
\includegraphics[width=0.29\textwidth]{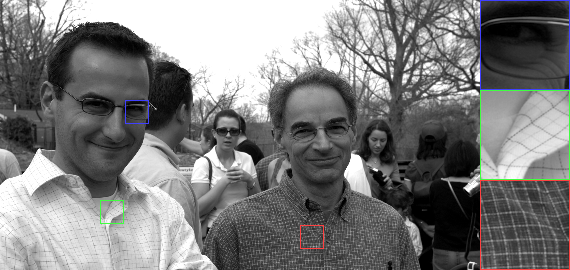}
\hfill
\includegraphics[width=0.29\textwidth]{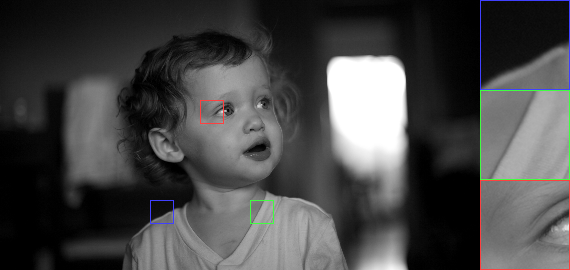}
\caption{Original images~(\textit{Flower}, \textit{Men}, and  \textit{Baby})}
\label{fig:fmc_a}
\vspace{1.7mm}
\end{subfigure}
\begin{subfigure}{0.996\textwidth}
\centering
\includegraphics[width=0.29\textwidth]{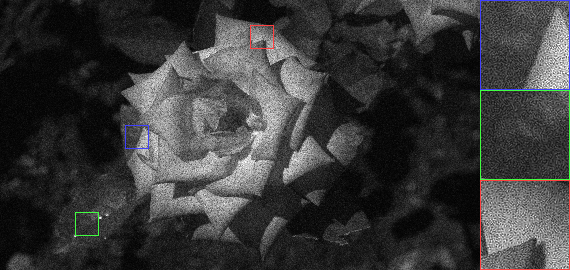}
\hfill
\includegraphics[width=0.29\textwidth]{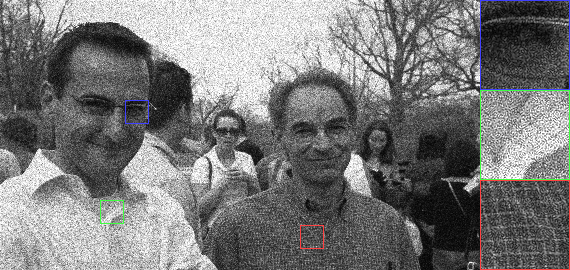}
\hfill
\includegraphics[width=0.29\textwidth]{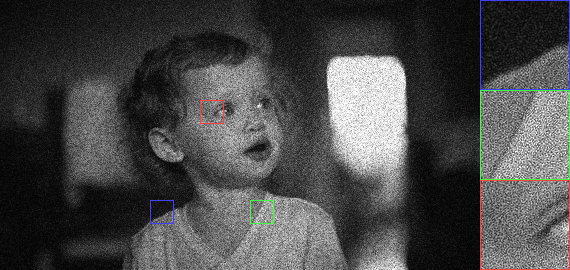}
\caption{Gerchberg-Saxton method~\cite{Gerchberg72}~(23.28, 16.70, and 20.74)}
\vspace{1.7mm}
\end{subfigure}
\begin{subfigure}{0.996\textwidth}
\centering
\includegraphics[width=0.29\textwidth]{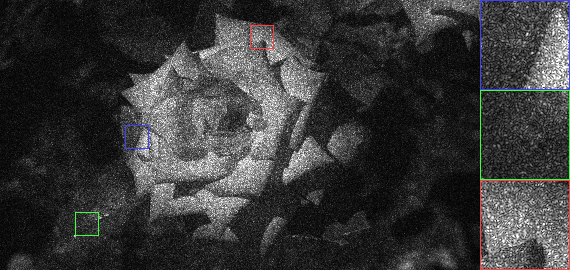}
\hfill
\includegraphics[width=0.29\textwidth]{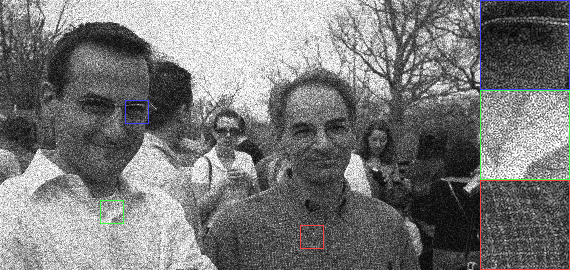}
\hfill
\includegraphics[width=0.29\textwidth]{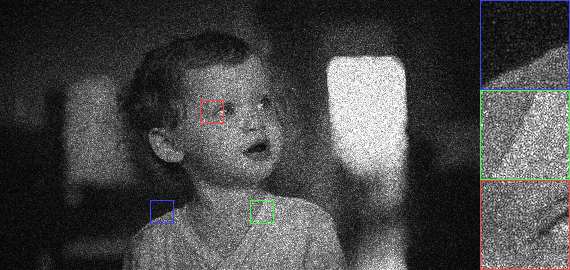}
\caption{Kaczmarz method~\cite{Guendy21}~(20.04, 15.82, and 18.80)}
\vspace{1.7mm}
\end{subfigure}
\begin{subfigure}{0.996\textwidth}
\centering
\includegraphics[width=0.29\textwidth]{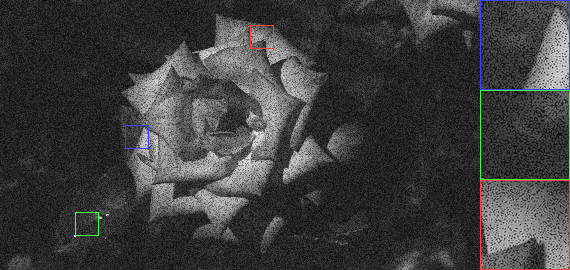}
\hfill
\includegraphics[width=0.29\textwidth]{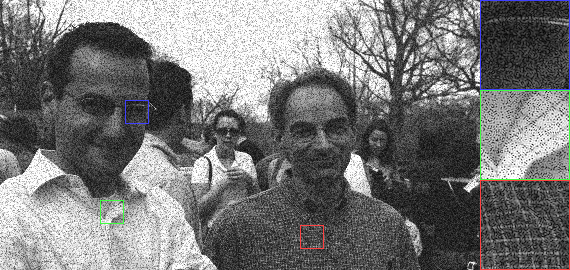}
\hfill
\includegraphics[width=0.29\textwidth]{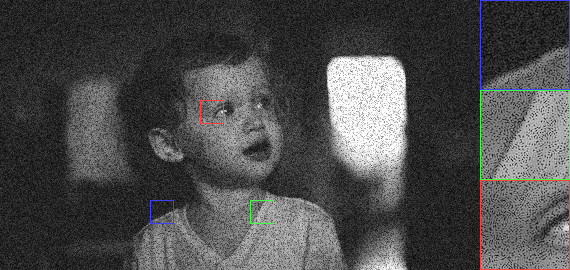}
\caption{WFPF~\cite{Chakravarthula19} w/ $\alpha=10^{-2}$~(19.50, 17.23, and 18.79)}
\vspace{1.7mm}
\end{subfigure}
\begin{subfigure}{0.996\textwidth}
\centering
\includegraphics[width=0.29\textwidth]{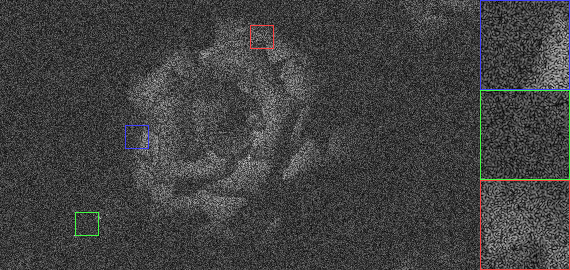}
\hfill
\includegraphics[width=0.29\textwidth]{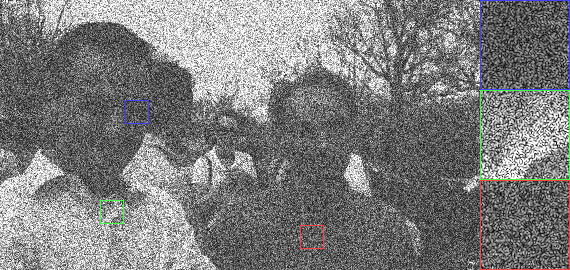}
\hfill
\includegraphics[width=0.29\textwidth]{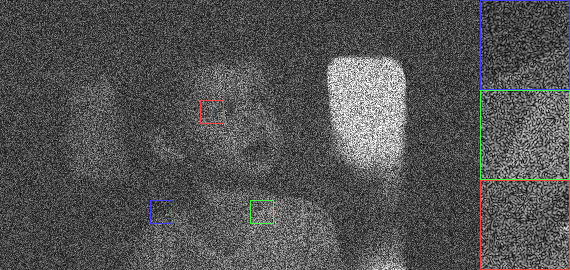}
\caption{WFPF~\cite{Chakravarthula19} w/ $\alpha=10^{-3}$~(14.12, 12.07, and 12.87)}
\label{fig:fmc_e}
\vspace{1.7mm}
\end{subfigure}
\begin{subfigure}{0.996\textwidth}
\centering
\includegraphics[width=0.29\textwidth]{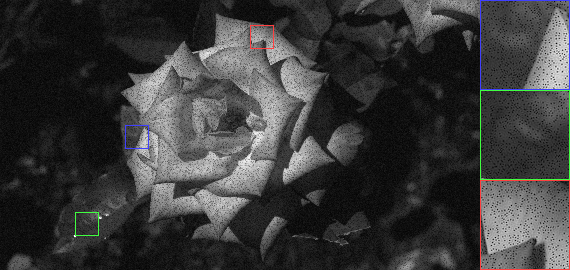}
\hfill
\includegraphics[width=0.29\textwidth]{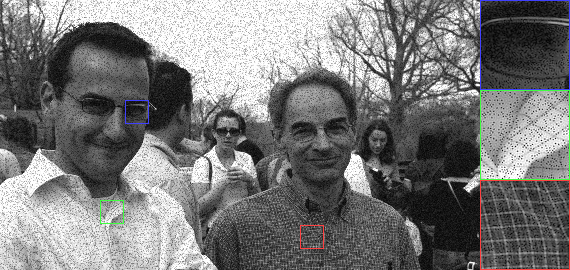}
\hfill
\includegraphics[width=0.29\textwidth]{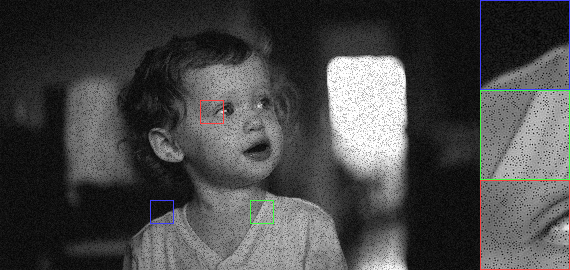}
\caption{WFPF~\cite{Chakravarthula19} w/ Adam~(25.00, 19.47, and 22.97)}
\vspace{1.7mm}
\end{subfigure}
\begin{subfigure}{0.996\textwidth}
\centering
\includegraphics[width=0.29\textwidth]{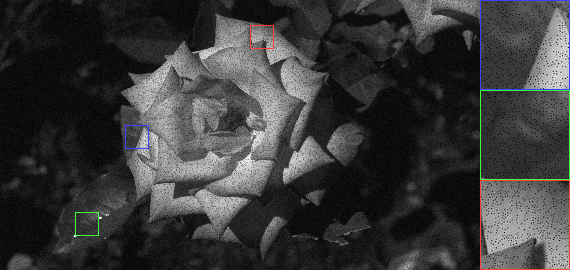}
\hfill
\includegraphics[width=0.29\textwidth]{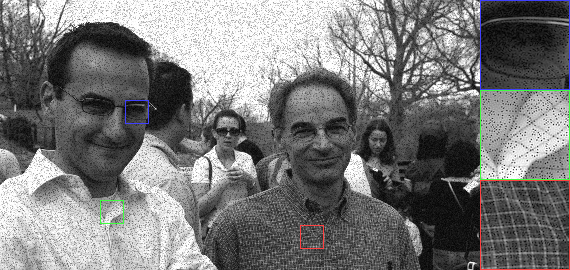}
\hfill
\includegraphics[width=0.29\textwidth]{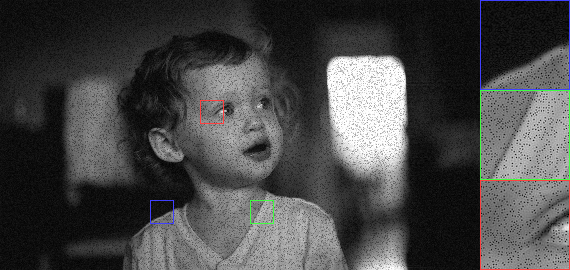}
\caption{WFCF~(ours) w/ $\alpha=10^{-2}$~(\textbf{25.65}, \textbf{19.90}, and \textbf{23.49})}
\vspace{1.7mm}
\end{subfigure}
\begin{subfigure}{0.996\textwidth}
\centering
\includegraphics[width=0.29\textwidth]{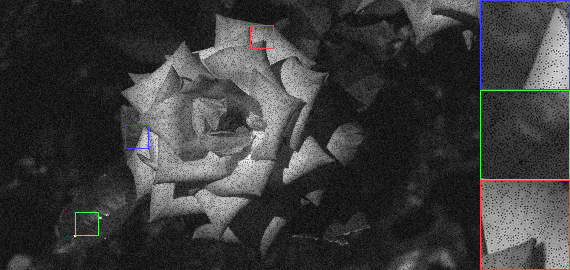}
\hfill
\includegraphics[width=0.29\textwidth]{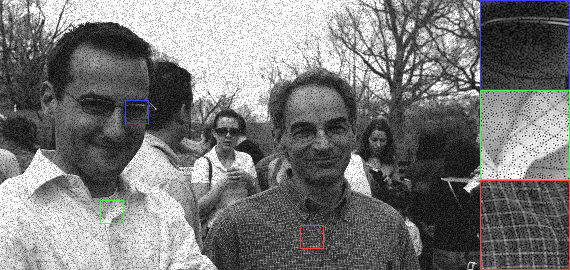}
\hfill
\includegraphics[width=0.29\textwidth]{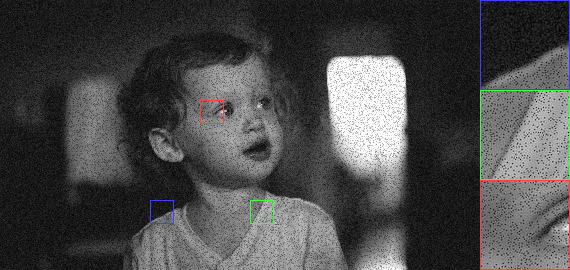}
\caption{WFCF~(ours) w/ $\alpha=10^{-3}$~(22.79, 18.81, and 21.49)}
\end{subfigure}
\caption{Displayed images and PSNR values in simulation~(\textit{Flower}, \textit{Men}, and \textit{Baby}).}
\label{fig:fmc}
\end{figure}

\clearpage
\begin{figure}[!t]
\centering
\includegraphics[scale=1.09]{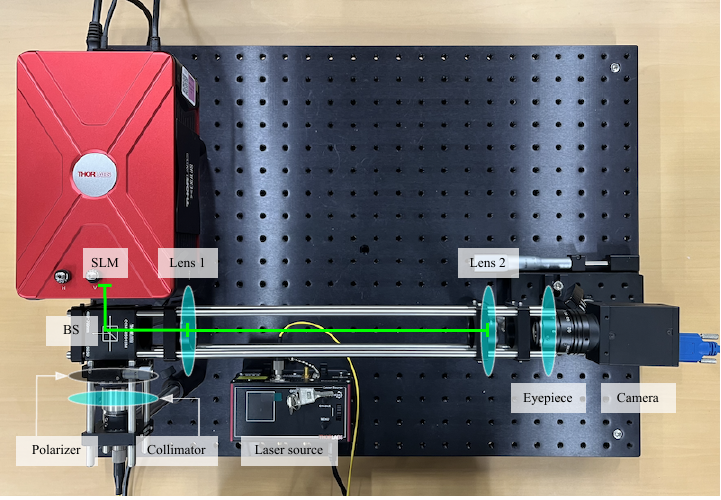}\hspace{-0.50mm}
\caption{Optical implementation of holography.}
\label{fig:imp}
\vspace{5mm}
\end{figure}
\begin{table}[!t]
\caption{Parameters of optical system in Fig.~\ref{fig:imp}.}
\label{tab:param}
\vspace{-5mm}
\begin{center}
\footnotesize
\renewcommand{\arraystretch}{1.3}
\begin{tabular}{wl{50mm}|wr{30mm}}  \hline
Distance between SLM and lens~1 & 82~[mm] \\ \hline
Distance between lenses~1 and 2 & 220~[mm]\\ \hline
Focal length of lens~1      & 150~[mm] \\ \hline
Focal length of lens~2      & 100~[mm] \\ \hline
Focal length of eyepiece & 35~[mm] \\ \hline
Wavelength of laser & 635~[nm] \\ \hline
Pixel pitch of SLM   & 6.4~[\textmu m] \\ \hline
Pixel pitch of camera        & 3.69~[\textmu m] \\ \hline
Resolution of SLM & 1920 $\times$ 1080~[pixels]\\\hline
Resolution of camera & 1928 $\times$ 1448~[pixels]\\\hline
\end{tabular}
\end{center}
\vspace{5mm}
\end{table}
\begin{figure}[!t]
\begin{subfigure}{0.33\textwidth}
\centering
\includegraphics[width=0.98\textwidth]{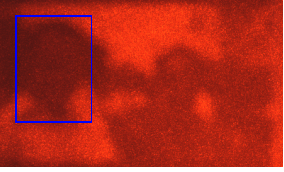}
\caption{WFPF~\cite{Chakravarthula19} w/ $\alpha=10^{-3}$}
\label{fig:disp_a}
\vspace{1.5mm}
\end{subfigure}\hspace{0.1mm}
\begin{subfigure}{0.33\textwidth}
\centering
\includegraphics[width=0.98\textwidth]{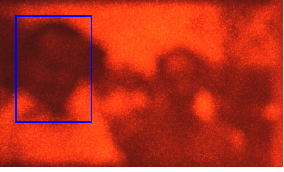}
\caption{WFPF~\cite{Chakravarthula19} w/ $\alpha=10^{-2}$}
\label{fig:disp_b}
\vspace{1.5mm}
\end{subfigure}
\begin{subfigure}{0.33\textwidth}
\centering
\includegraphics[width=0.98\textwidth]{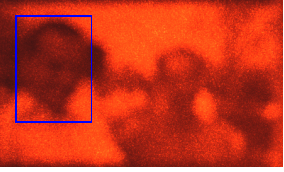}
\caption{WFCF~(ours) w/ $\alpha=10^{-2}$}
\label{fig:disp_c}
\vspace{1.5mm}
\end{subfigure}\hspace{-0.3mm}
\caption{Optically displayed images~(\textit{Men}).}
\label{fig:disp}
\end{figure}

\newpage
\begin{figure}[!t]
\centering
\includegraphics[width=0.84\textwidth]{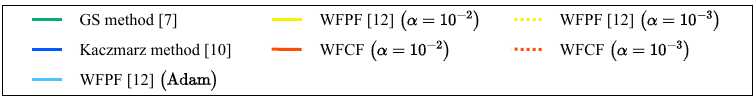}\\
\vspace{4mm}
\begin{subfigure}{0.49\textwidth}
\centering
\includegraphics[width=0.99\textwidth]{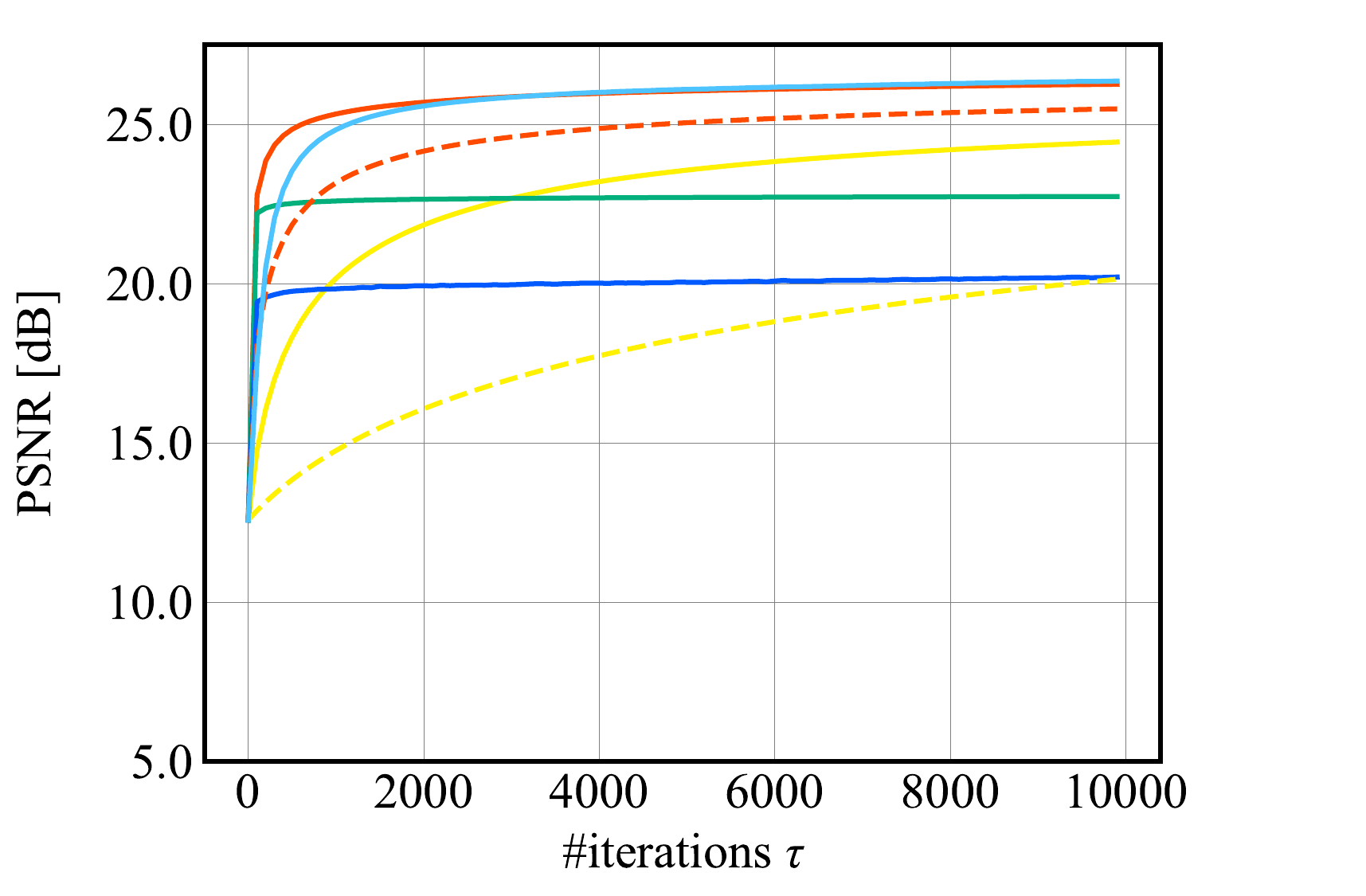}
\caption{\textit{Dog}}
\label{fig:psnr_a}
\vspace{2.0mm}
\end{subfigure}
\begin{subfigure}{0.49\textwidth}
\centering
\includegraphics[width=0.99\textwidth]{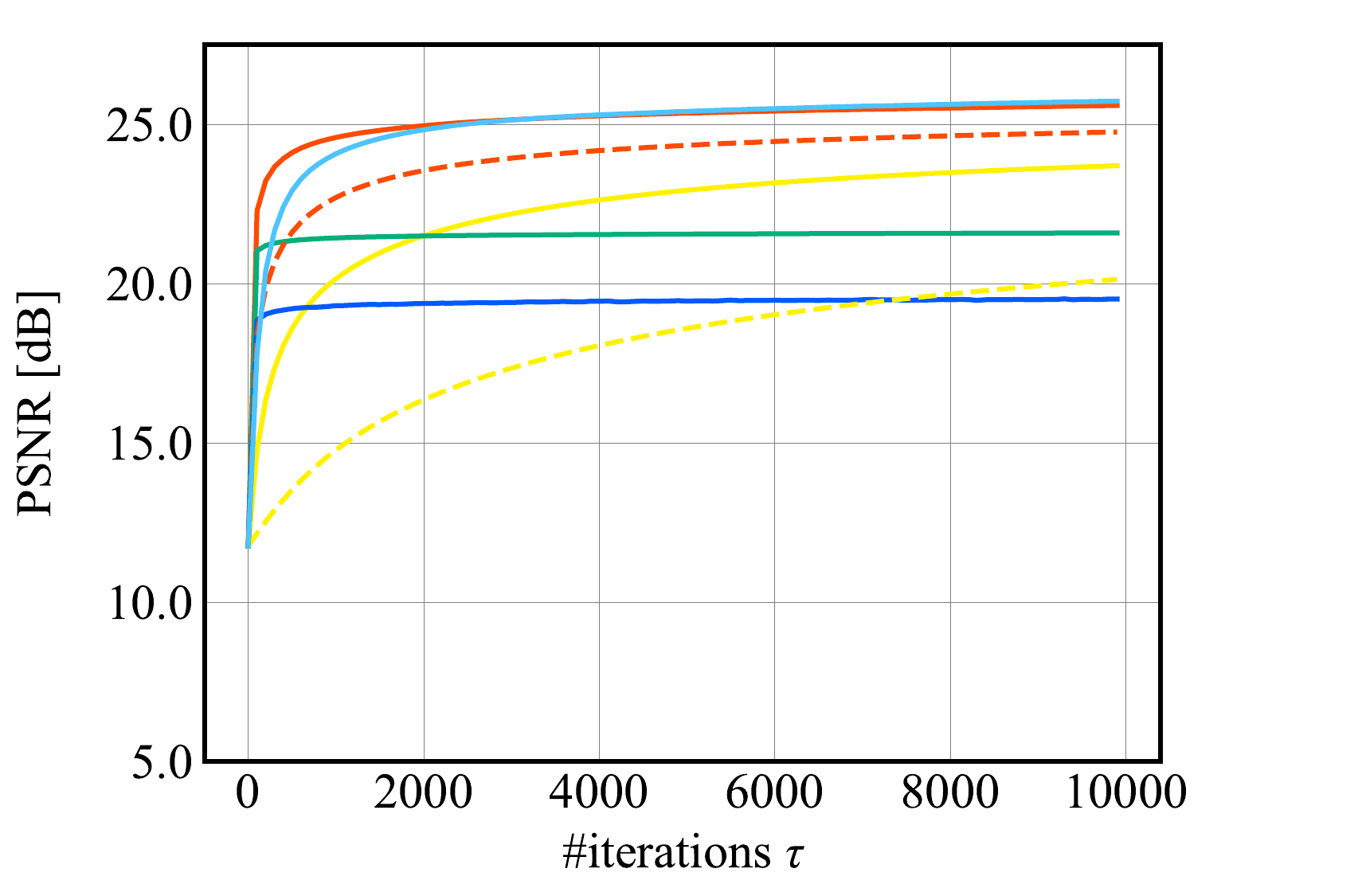}
\caption{\textit{Penguins}}
\vspace{2.0mm}
\end{subfigure}\\
\begin{subfigure}{0.49\textwidth}
\centering
\includegraphics[width=0.99\textwidth]{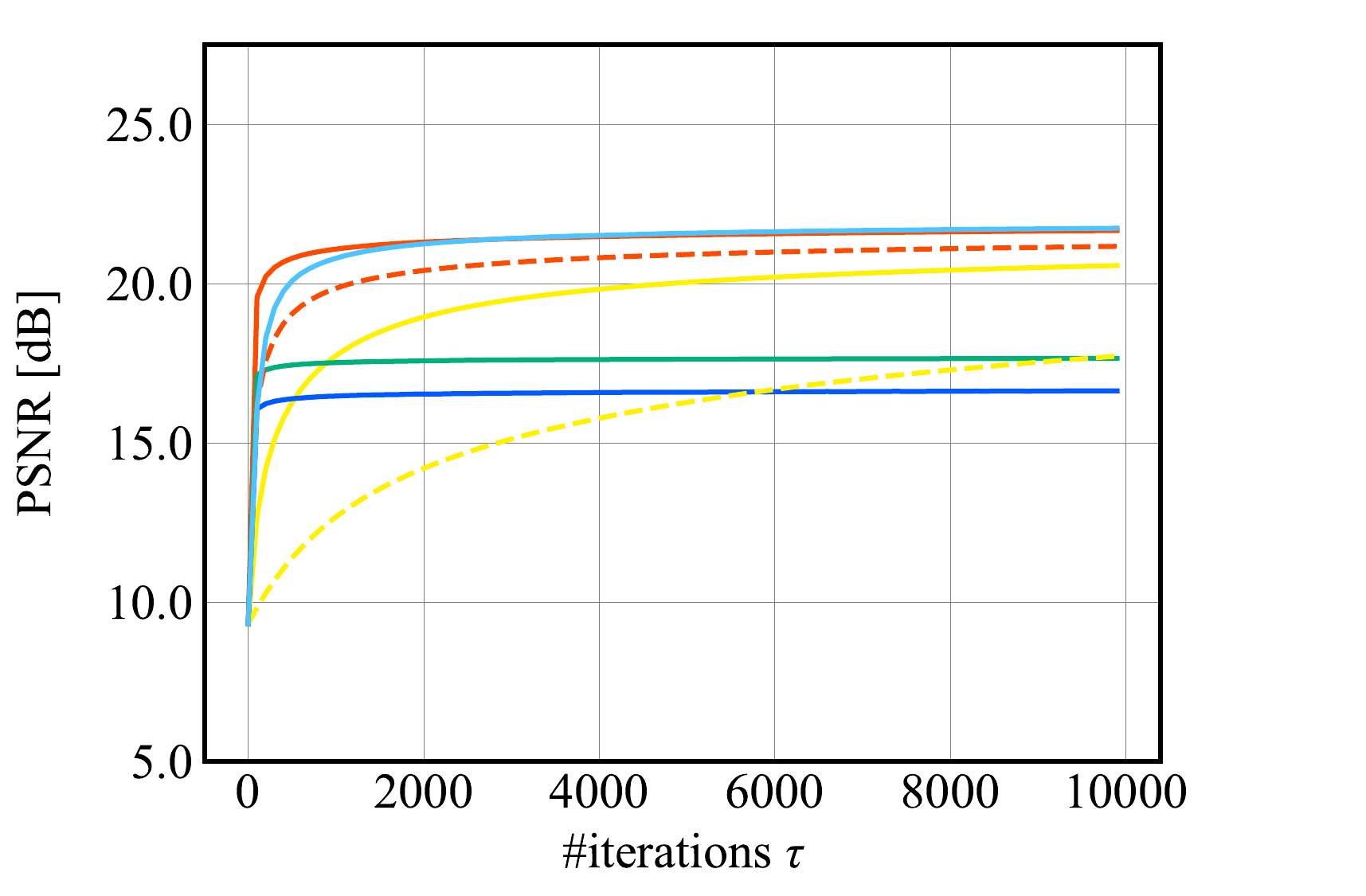}
\caption{\textit{Rock}}
\vspace{2.0mm}
\end{subfigure}
\begin{subfigure}{0.49\textwidth}
\centering
\includegraphics[width=0.99\textwidth]{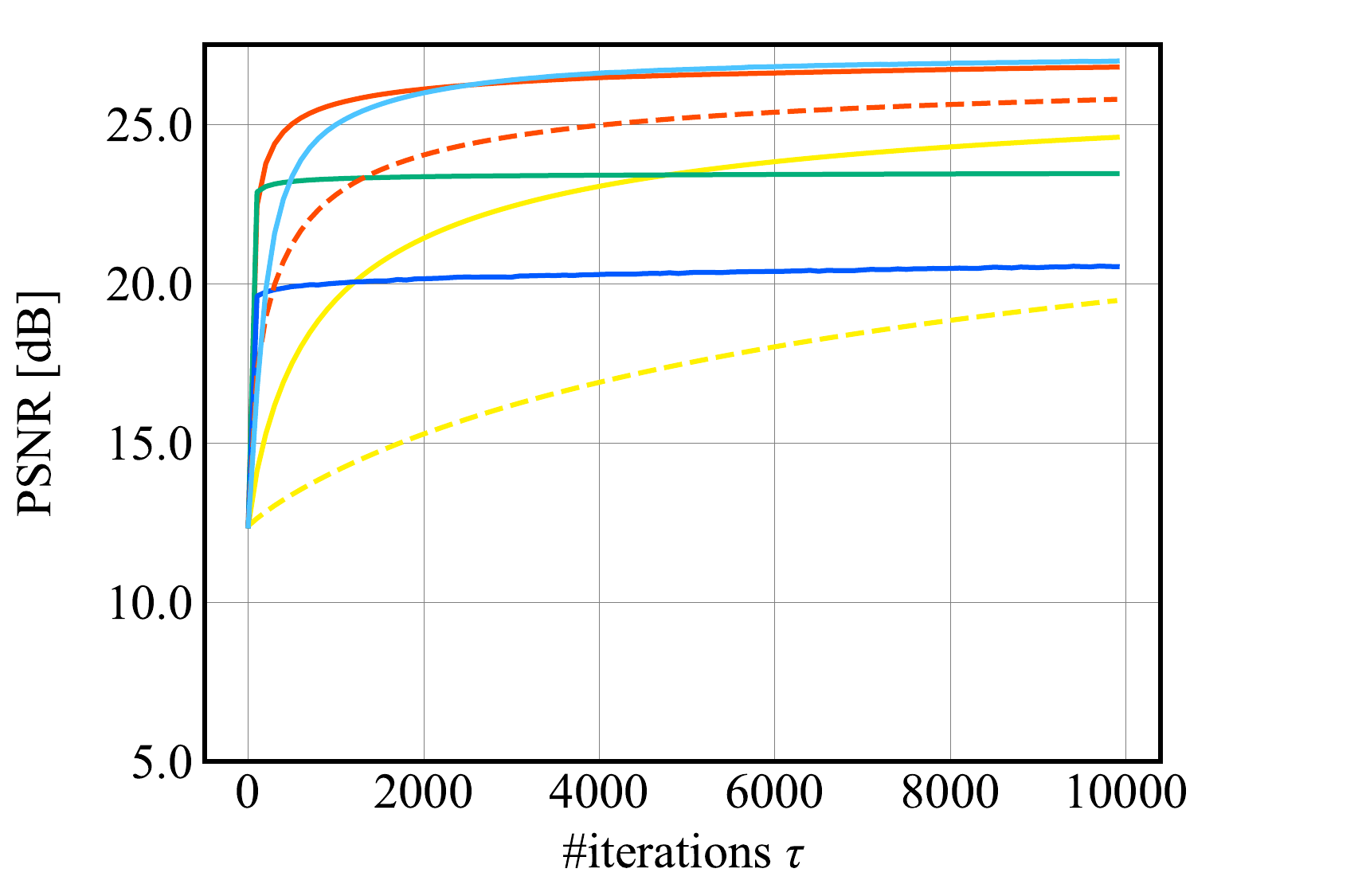}
\caption{\textit{Flower}}
\vspace{2.0mm}
\end{subfigure}\\
\begin{subfigure}{0.49\textwidth}
\centering
\includegraphics[width=0.99\textwidth]{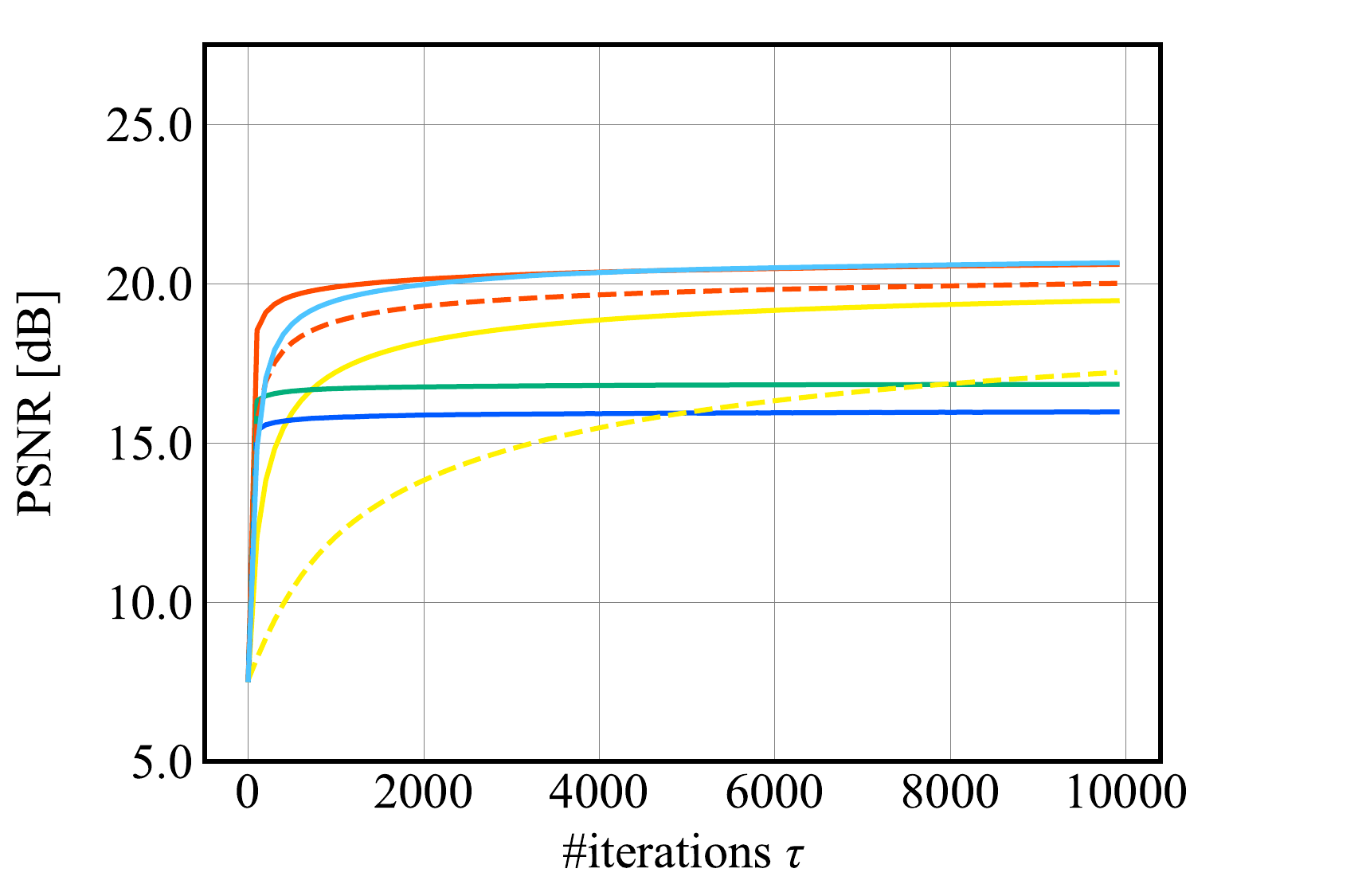}
\caption{\textit{Men}}
\vspace{2.0mm}
\end{subfigure}
\begin{subfigure}{0.49\textwidth}
\centering
\includegraphics[width=0.99\textwidth]{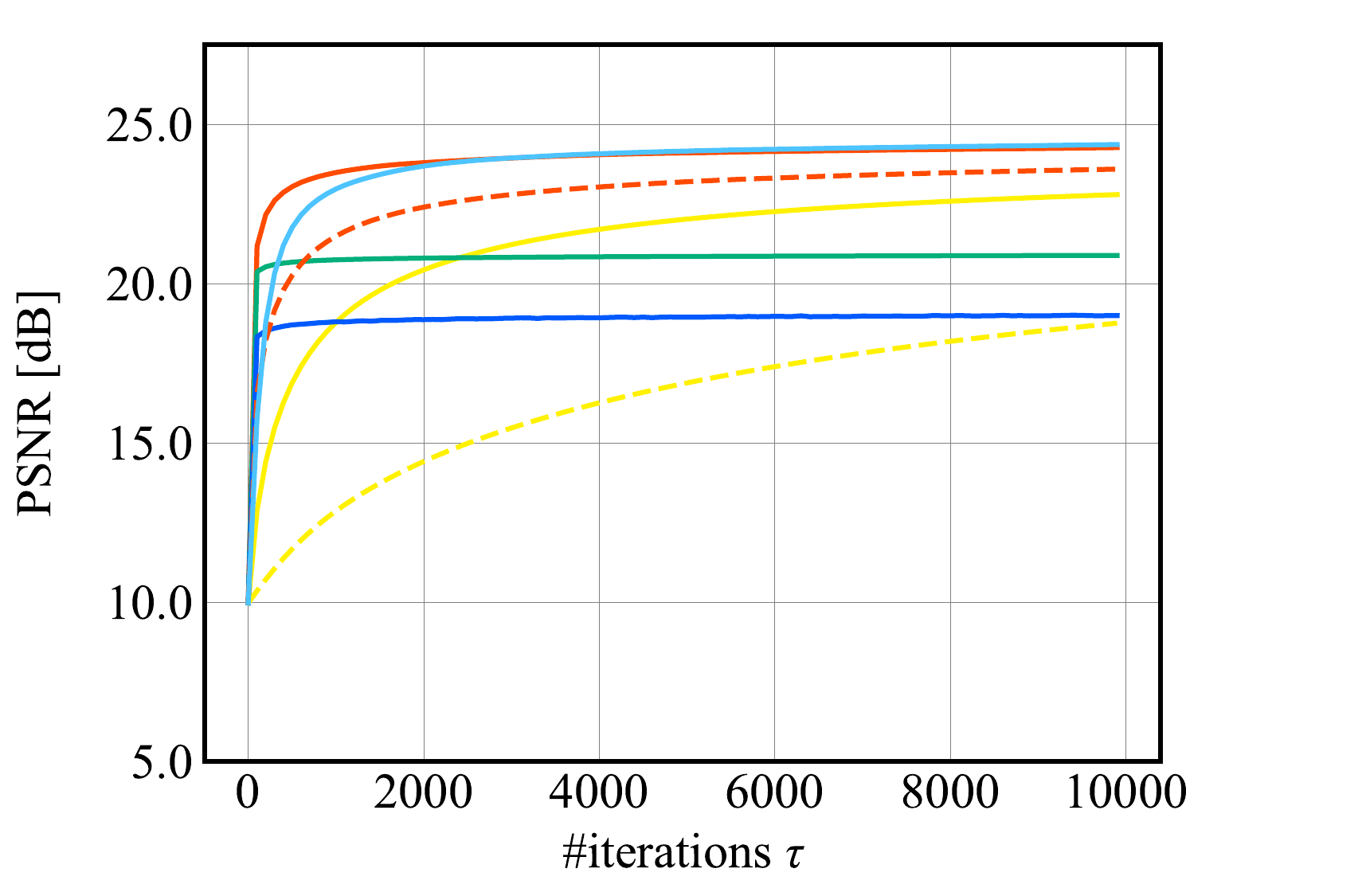}
\caption{\textit{Baby}}
\vspace{2.0mm}
\end{subfigure}
\caption{PSNR values against number of iterations.}
\label{fig:psnr}
\end{figure}

\newpage
\begin{figure}[!t]
\centering
\begin{subfigure}{0.99\textwidth}
\centering
\includegraphics[width=0.49\textwidth]{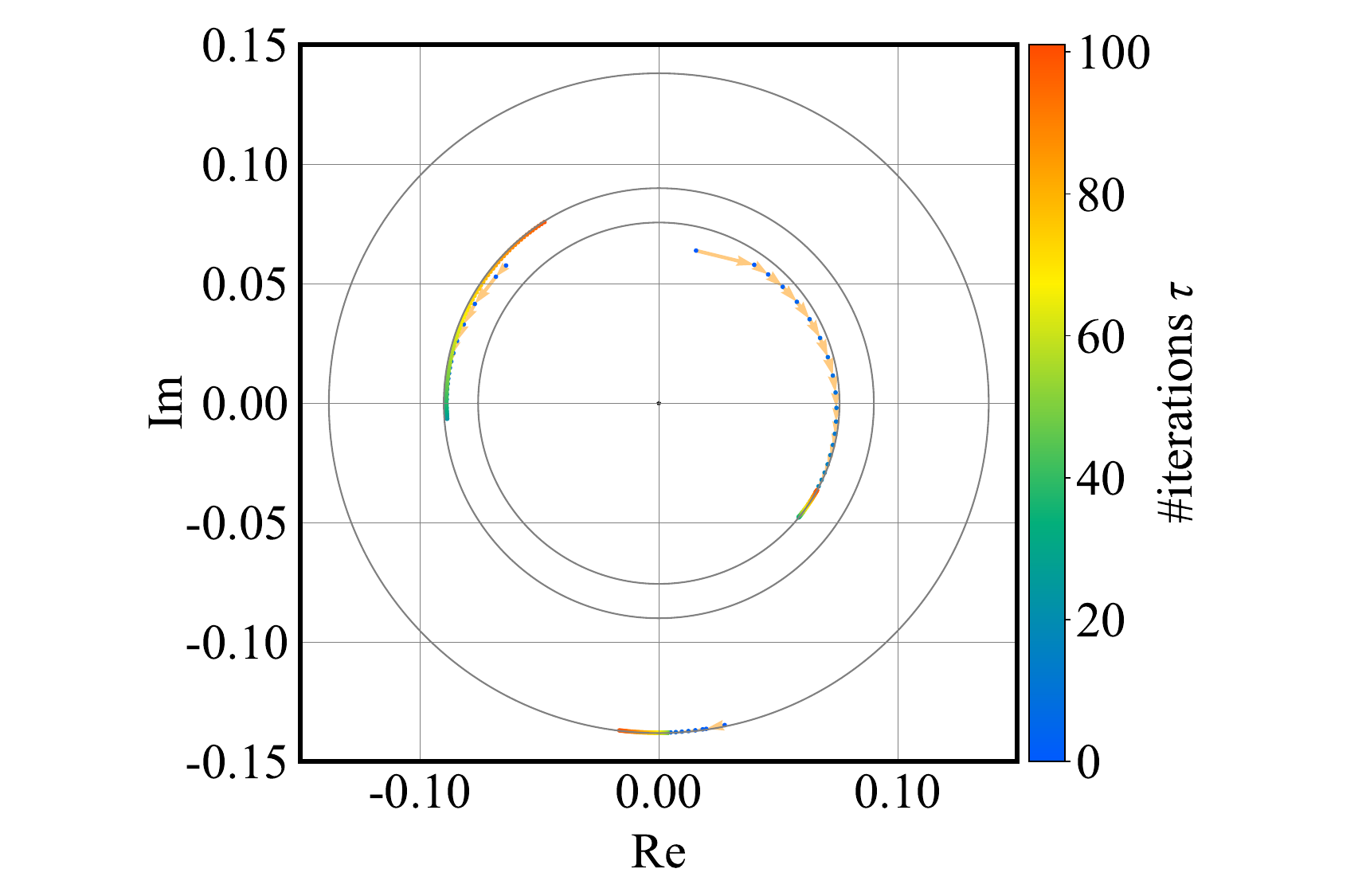}\hspace{0.6mm}
\includegraphics[width=0.49\textwidth]{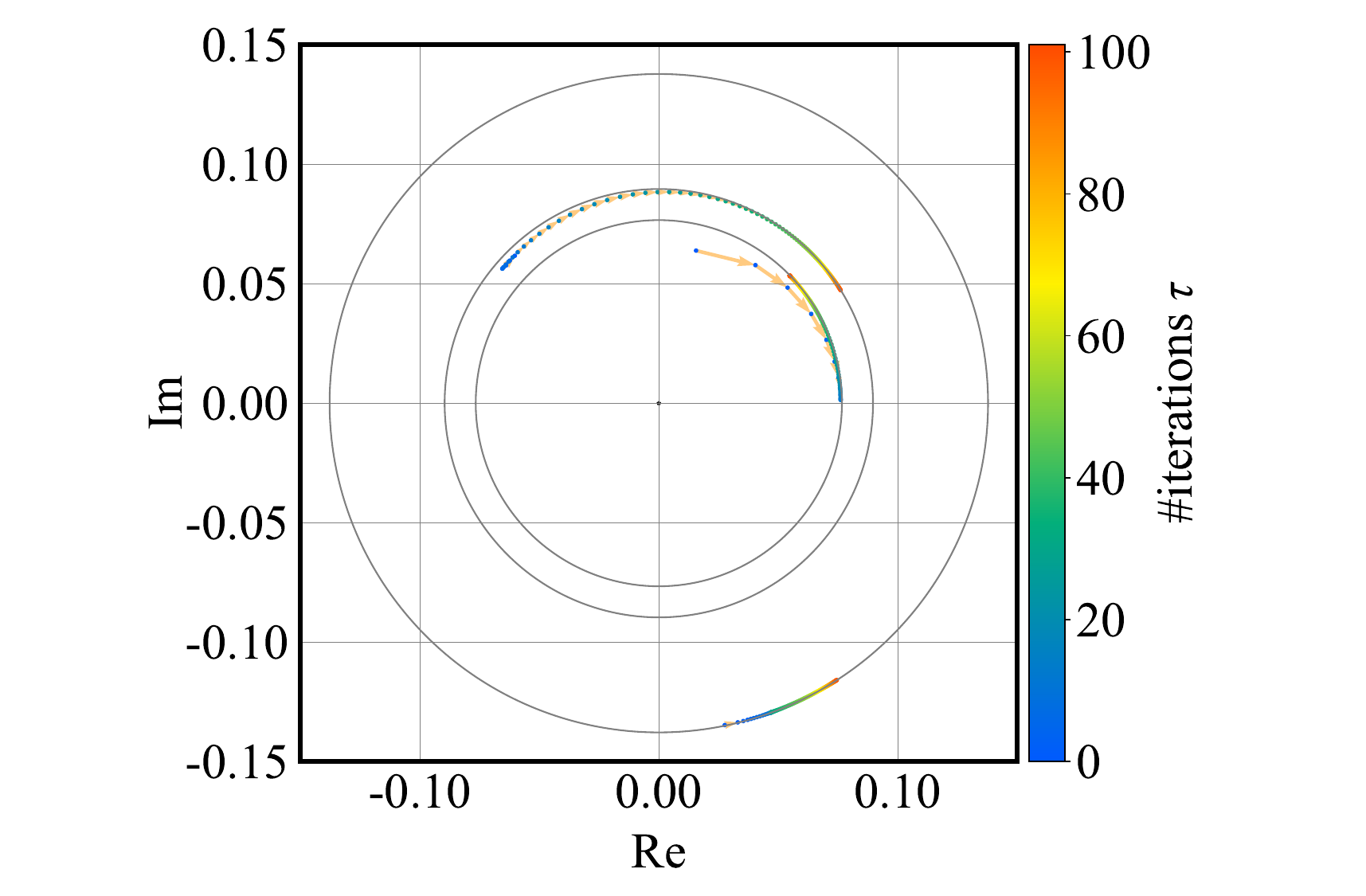}\hspace{-2.4mm}
\caption{$c_n[\tau]$ and circles of limit radii $|c_n[\infty]|$}
\label{fig:thorem_a}
\vspace{3.5mm}
\end{subfigure}\\
\begin{subfigure}{0.99\textwidth}
\centering
\includegraphics[width=0.49\textwidth]{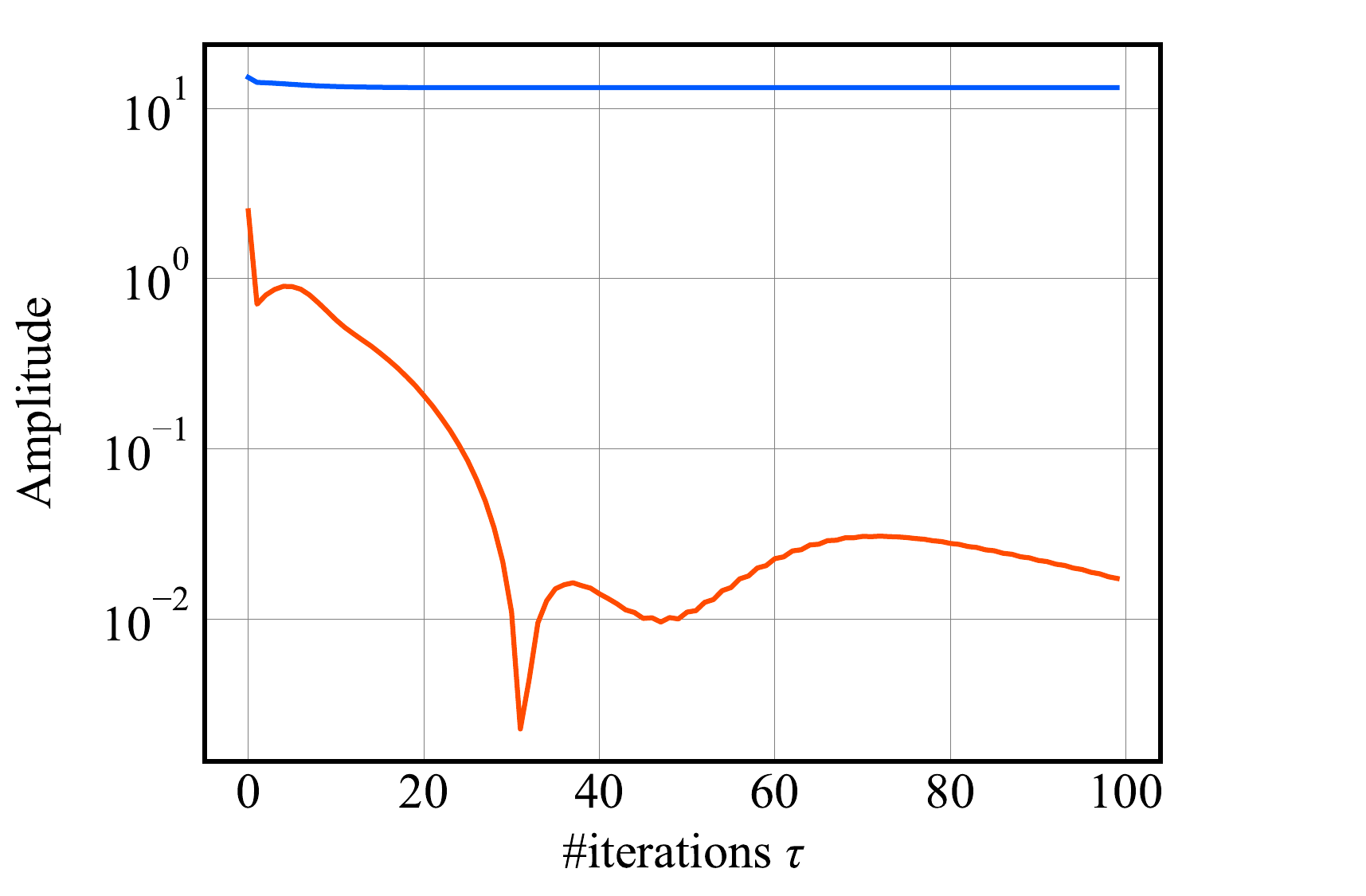}\hspace{0.7mm}
\includegraphics[width=0.49\textwidth]{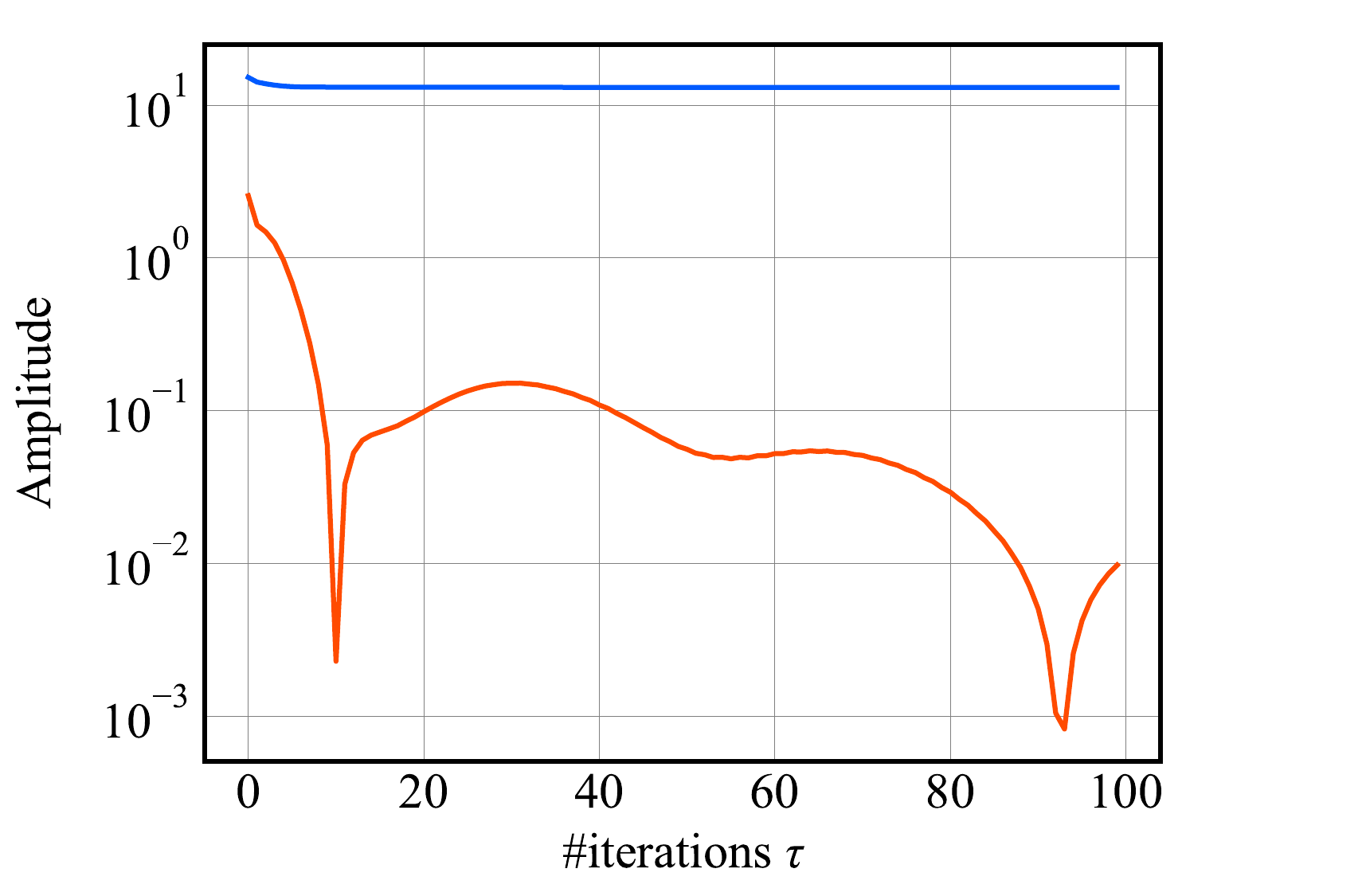}
\caption{$|\partial L(\boldsymbol{c}[\tau])/\partial \bar{c}_n|$~(red) and its upper bound $1/|c_n[\tau]|$~(blue)}
\label{fig:thorem_b}
\vspace{3.5mm}
\end{subfigure}\\
\centering
\begin{subfigure}{0.99\textwidth}
\centering
\includegraphics[width=0.49\textwidth]{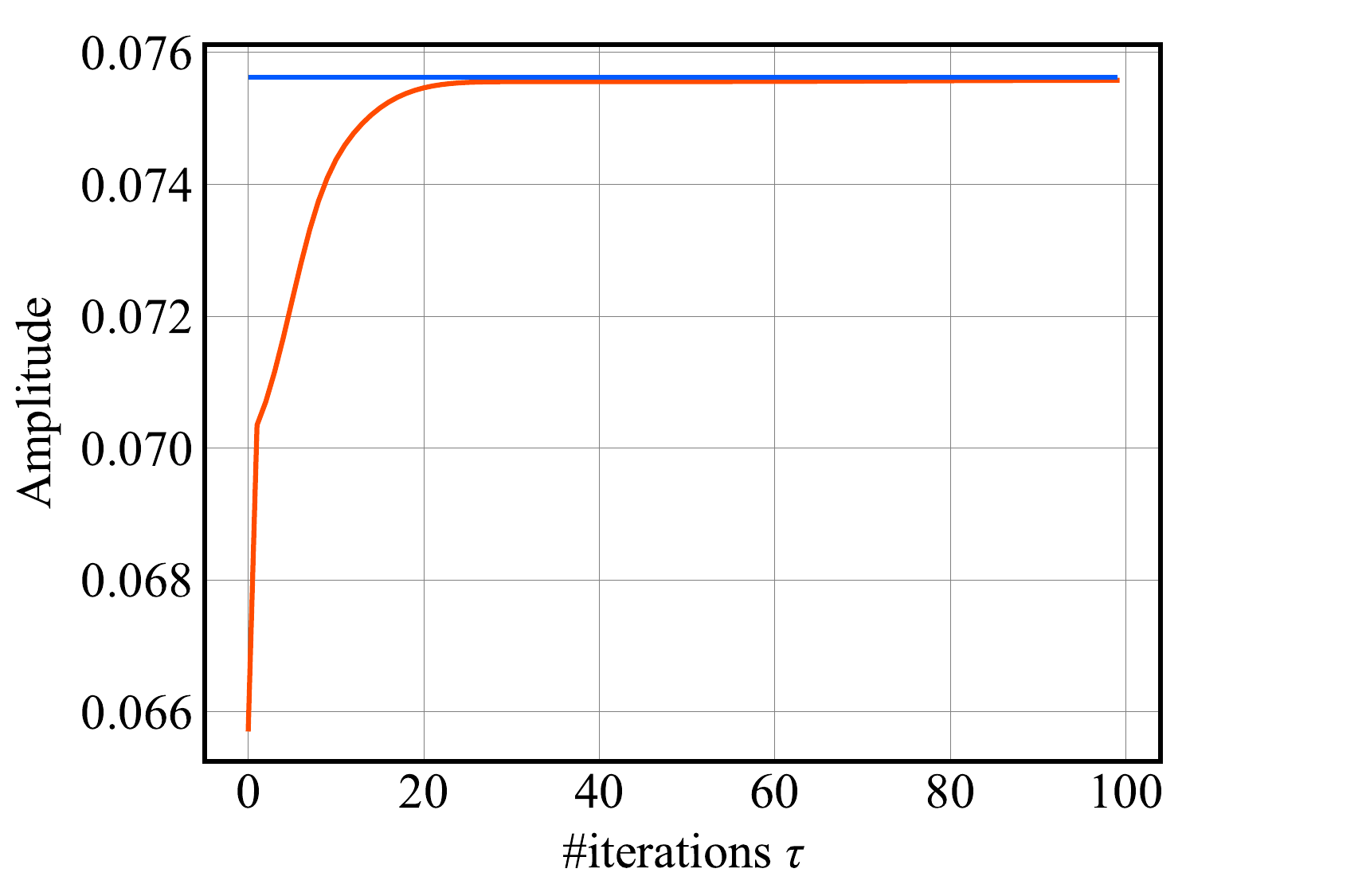}\hspace{0.7mm}
\includegraphics[width=0.49\textwidth]{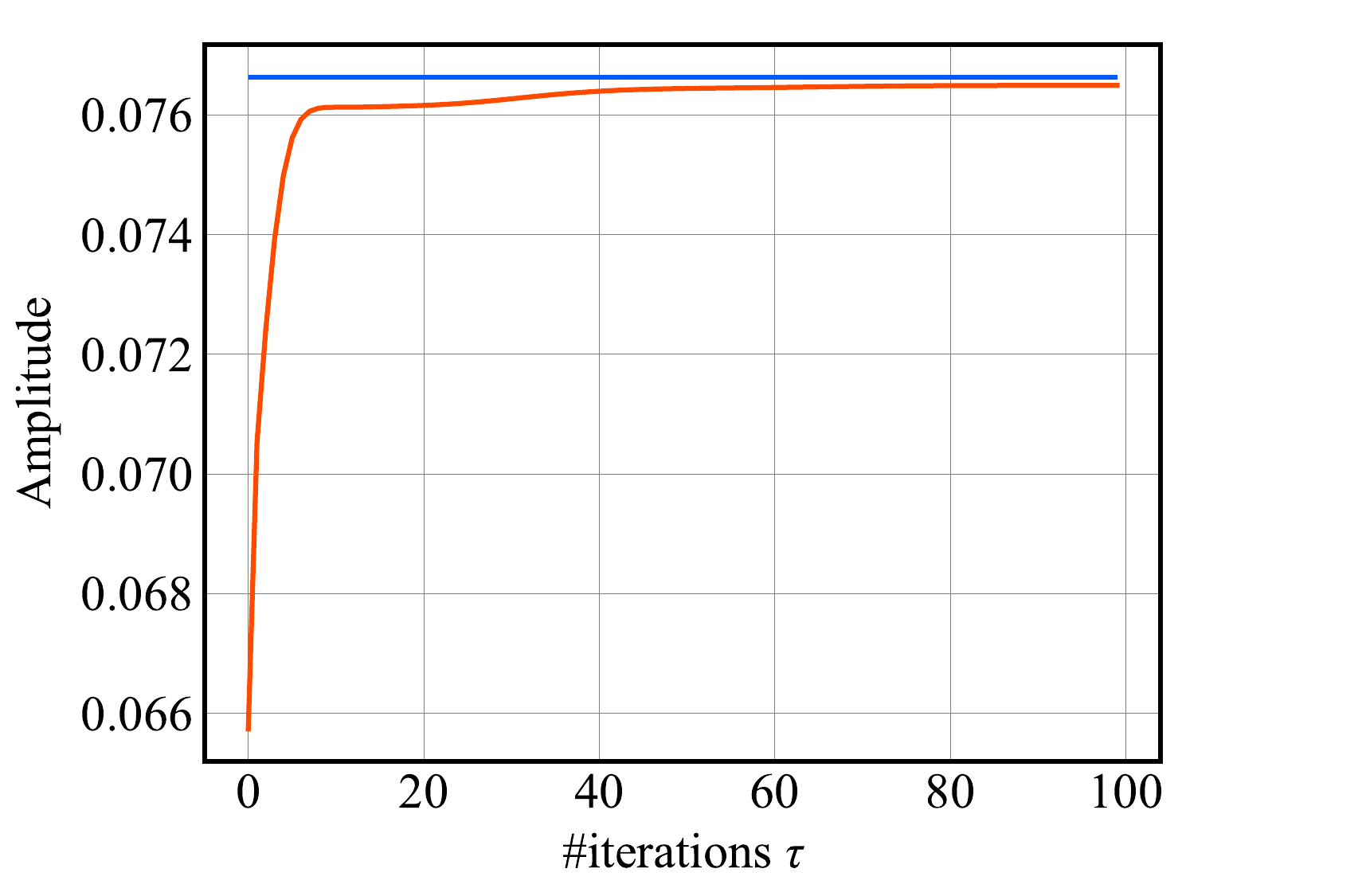}
\caption{$|c_n[\tau]|$~(red) and its limit $|c_n[\infty]|$~(blue)}
\label{fig:thorem_c}
\vspace{3.5mm}
\end{subfigure}\\
\begin{subfigure}{0.99\textwidth}
\centering
\includegraphics[width=0.49\textwidth]{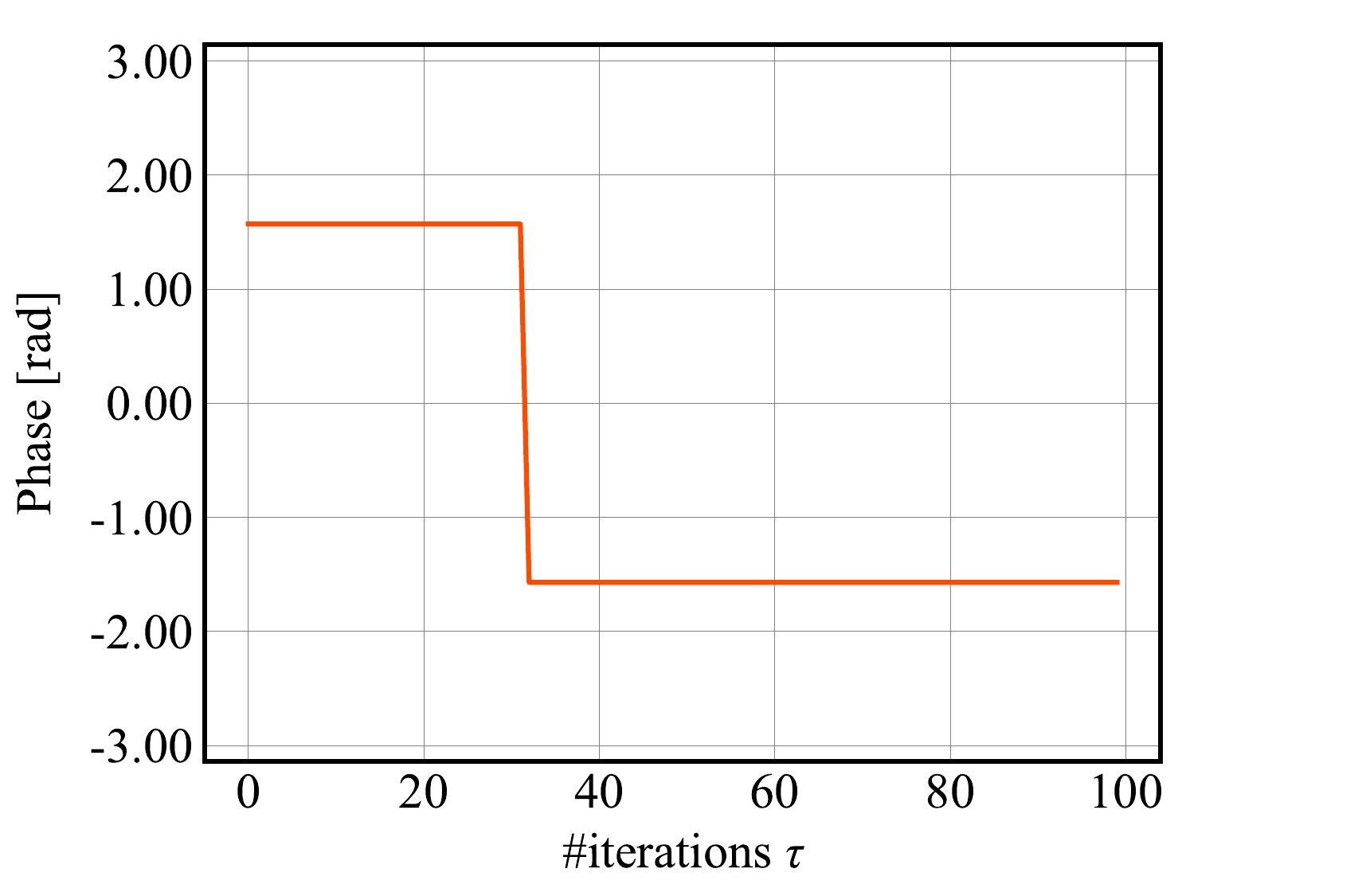}\hspace{0.7mm}
\includegraphics[width=0.49\textwidth]{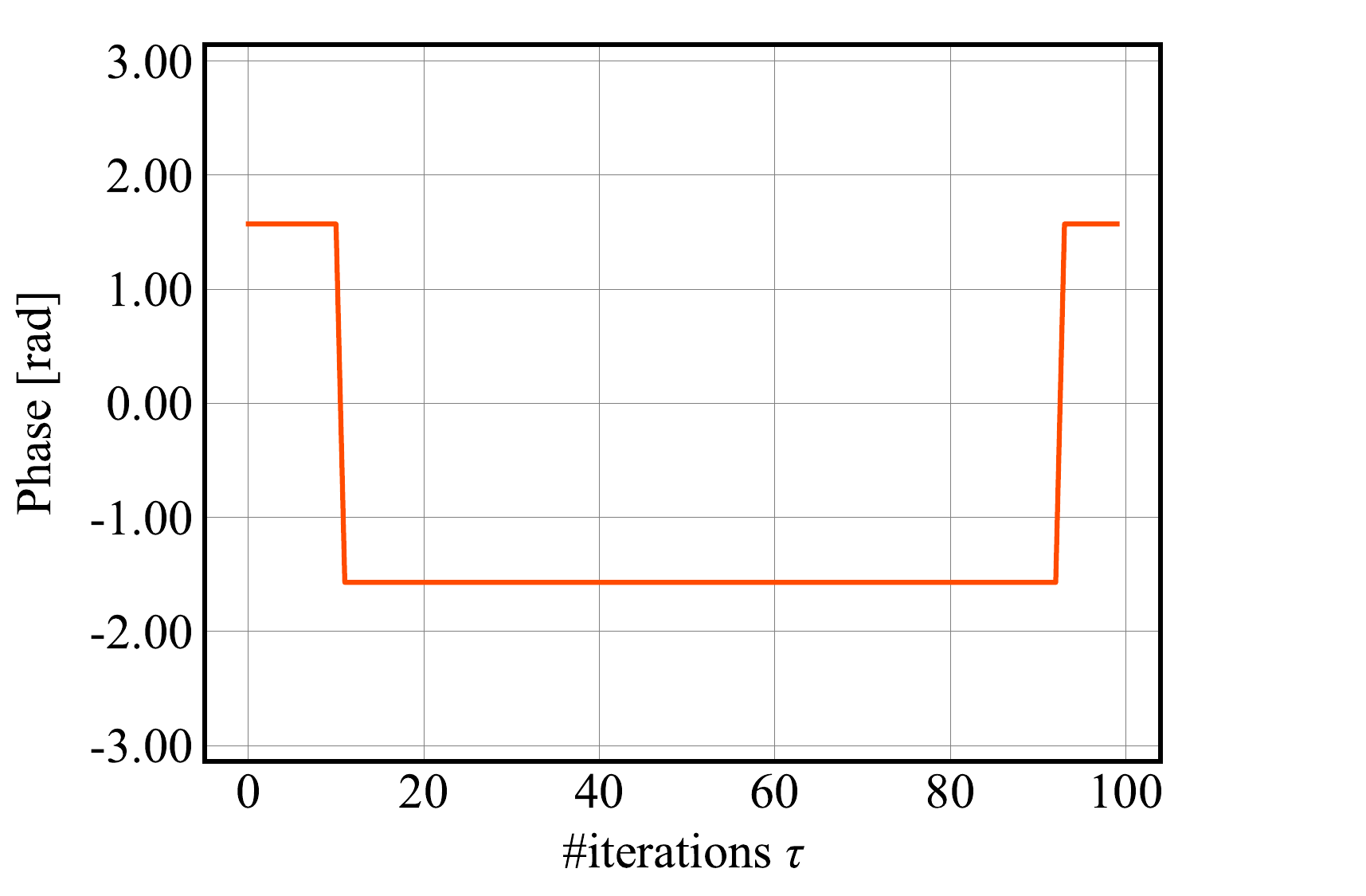}
\caption{$\mathop{\mathrm{arg}}(\partial L(\boldsymbol{c}[\tau])/\partial \bar{c}_n) - \mathop{\mathrm{arg}}(c_n[\tau])$}
\label{fig:thorem_d}
\vspace{1.5mm}
\end{subfigure}
\caption{Amplitudes and phases~(left: synthetic, right: \textit{Rock}).}
\label{fig:thorem}
\end{figure}

\clearpage
\appendix
\def\thesection{Appendix~\Alph{section}}
\section{Proof of Theorem~\ref{t1}}
\label{a1}
First, we claim the following two statements.
\begin{lemma}
\label{l1}
\begin{equation}
\frac{\partial}{\partial \bar{c}_n} \frac{c_n}{|c_n|}= 
-\frac{1}{2}\frac{1}{|c_n|}\bigg(\frac{c_n}{|c_n|}\bigg)^2
\label{eq:lemma1}
\end{equation}
\end{lemma}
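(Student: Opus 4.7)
The plan is to exploit the fact that within the Wirtinger calculus $c_n$ and $\bar{c}_n$ behave as independent variables, so that $\partial c_n/\partial \bar{c}_n = 0$ and $\partial \bar{c}_n/\partial \bar{c}_n = 1$, together with the identity $|c_n|^2 = c_n \bar{c}_n$. With these rules in hand, the lemma reduces to a short computation using only the standard product/chain rules, which remain valid in the Wirtinger setting because the target function is a composition of real-differentiable maps.

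First I would write $|c_n| = (c_n \bar{c}_n)^{1/2}$ and apply the chain rule to obtain
\begin{equation}
\frac{\partial}{\partial \bar{c}_n}|c_n| = \frac{1}{2}(c_n\bar{c}_n)^{-1/2}\cdot c_n = \frac{c_n}{2|c_n|}.
\end{equation}
Next I would view $c_n/|c_n|$ as the product $c_n \cdot |c_n|^{-1}$ and differentiate using the product rule with respect to $\bar{c}_n$. The term coming from differentiating $c_n$ vanishes, leaving only
\begin{equation}
\frac{\partial}{\partial \bar{c}_n}\frac{c_n}{|c_n|} = c_n \cdot \left(-\frac{1}{|c_n|^2}\right)\cdot \frac{\partial |c_n|}{\partial \bar{c}_n} = -\frac{c_n}{|c_n|^2}\cdot \frac{c_n}{2|c_n|} = -\frac{c_n^2}{2|c_n|^3}.
\end{equation}

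Finally I would rearrange this expression by pulling out a factor of $1/(2|c_n|)$ and recognizing $c_n^2/|c_n|^2 = (c_n/|c_n|)^2$, which delivers the right-hand side of \eqref{eq:lemma1} exactly. There is no real obstacle beyond bookkeeping; the only subtlety is to remember that $c_n$ is treated as a constant when differentiating with respect to $\bar{c}_n$, so the first term of the product rule drops out and the entire contribution comes from differentiating $|c_n|$, which couples $c_n$ and $\bar{c}_n$ symmetrically through $|c_n|^2 = c_n\bar{c}_n$.
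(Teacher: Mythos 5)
Your proof is correct, but it takes a different route from the paper's. The paper works directly from the definition of the Wirtinger derivative, writing $c_n=a_n+jb_n$ and computing $\frac{1}{2}\bigl(\partial_a + j\,\partial_b\bigr)\frac{a+jb}{\sqrt{a^2+b^2}}$ by hand before repackaging the result in terms of $c_n$ and $|c_n|$. You instead stay entirely in the formal calculus: treat $c_n$ and $\bar{c}_n$ as independent, use $|c_n|^2=c_n\bar{c}_n$ to get $\partial|c_n|/\partial\bar{c}_n = c_n/(2|c_n|)$, and finish with the product rule. Both computations land on $-c_n^2/(2|c_n|^3)$, which equals the stated right-hand side. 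Your version is shorter and less error-prone, and it makes transparent why the answer is what it is (all the $\bar{c}_n$-dependence enters through $|c_n|$); the paper's version is more elementary in that it needs nothing beyond the definition, whereas yours tacitly relies on the standard facts that the product rule and the chain rule through the real-valued intermediate $u=c_n\bar{c}_n$ hold for Wirtinger derivatives — facts that are true and routine to verify (the map $(a_n,b_n)\mapsto |c_n|$ is real-differentiable for $|c_n|>0$, which the paper assumes throughout), but which a fully self-contained writeup would state. No gap otherwise.
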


\begin{proof}
For notation convenience, we use $c=a+jb$ instead of $c_n=a_n+jb_n$.
\begin{align}
\frac{\partial}{\partial \bar{c}} \frac{c}{|c|}
&= \frac{1}{2}\bigg(\frac{\partial }{\partial a} \frac{a + jb}{\sqrt{a^2 + b^2}}
+j \frac{\partial }{\partial b} \frac{a + jb}{\sqrt{a^2 + b^2}}\bigg)\\
&= \frac{1}{2}\bigg(\frac{1}{a^2 + b^2} 
\bigg( \sqrt{a^2+b^2} - a\frac{a + jb}{\sqrt{a^2+b^2}} \bigg)
+j \frac{1}{a^2 + b^2} 
\bigg(j\sqrt{a^2+b^2} - b\frac{a + jb}{\sqrt{a^2+b^2}} \bigg)\bigg)\\
&= \frac{1}{2}\frac{1}{a^2 + b^2} 
\bigg(- \frac{(a + jb)(a+jb)}{\sqrt{a^2+b^2}} \bigg)
\label{eq:l1}
\end{align}
Applying $a^2 + b^2 = |c|^2$, $(a + jb)(a+jb)=c^2$,
and $\sqrt{a^2 + b^2} = |c|$ to \eqref{eq:l1}, we derive \eqref{eq:lemma1}.
\end{proof}
\begin{lemma}
\label{l2}
\begin{equation}
\frac{\partial}{\partial \bar{c}_n} \frac{\bar{c}_n}{|\bar{c}_n|} = 
\frac{1}{2}
\frac{1}{|c_n|}
\label{eq:lemma2}
\end{equation}
\end{lemma}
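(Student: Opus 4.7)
The plan is to mirror the bookkeeping done for Lemma~\ref{l1}. Writing $c_n = a + jb$ so that $\bar{c}_n = a - jb$ and $|\bar{c}_n| = |c_n| = \sqrt{a^2+b^2}$, I would unfold the Wirtinger derivative by its definition,
\begin{equation}
\frac{\partial}{\partial \bar{c}_n}\frac{\bar{c}_n}{|\bar{c}_n|} = \frac{1}{2}\left(\frac{\partial}{\partial a}\frac{a-jb}{\sqrt{a^2+b^2}} + j\frac{\partial}{\partial b}\frac{a-jb}{\sqrt{a^2+b^2}}\right),
\end{equation}
and compute each real partial by the quotient rule. This is essentially the same computation as in Lemma~\ref{l1}, but with the numerator conjugated.

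Next, I would collect the outcome of the two partials over the common denominator $(a^2+b^2)^{3/2}$. The $\partial/\partial a$ piece yields $1/\sqrt{a^2+b^2} - a(a-jb)/(a^2+b^2)^{3/2}$, and the $\partial/\partial b$ piece yields $-j/\sqrt{a^2+b^2} - b(a-jb)/(a^2+b^2)^{3/2}$. Multiplying the latter by $j$ turns the leading $-j/\sqrt{a^2+b^2}$ into $+1/\sqrt{a^2+b^2}$, so the two ``diagonal'' terms add to $2/\sqrt{a^2+b^2}$. The two ``cross'' terms combine to $-(a+jb)(a-jb)/(a^2+b^2)^{3/2}$, and here lies the key contrast with Lemma~\ref{l1}: instead of the product $(a+jb)^2 = c_n^2$ appearing in the numerator, the mixed factors now produce $(a+jb)(a-jb) = |c_n|^2$, which cancels cleanly against $(a^2+b^2)^{3/2}$ to leave $-1/|c_n|$ rather than $-(c_n/|c_n|)^2/|c_n|$.

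Summing the two contributions gives $2/|c_n| - 1/|c_n| = 1/|c_n|$ inside the parentheses, and the overall factor of $1/2$ from the Wirtinger derivative then delivers the claimed identity $1/(2|c_n|)$. The main (and only) obstacle I expect is careful sign tracking: the conjugation flips the sign in front of $jb$, and the $\partial/\partial b$ derivative introduces another factor of $-j$, so the ``phase'' of each intermediate term must be kept straight to avoid miscombining the cross and diagonal parts. No additional machinery beyond what appears in Lemma~\ref{l1} is needed.
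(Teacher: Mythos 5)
Your computation is correct: each partial derivative, the sign handling of the conjugated numerator, and the final combination $2/|c_n| - 1/|c_n| = 1/|c_n|$ all check out, and the factor $1/2$ from the Wirtinger definition yields $1/(2|c_n|)$ as claimed. This is essentially identical to the paper's own proof, which also unfolds the definition in real coordinates, applies the quotient rule, and uses $(a+jb)(a-jb) = |c_n|^2$ to collapse the cross terms.
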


\begin{proof}
\begin{align}
\frac{\partial}{\partial \bar{c}}\frac{\bar{c}}{|\bar{c}|}
&= \frac{1}{2}\bigg(\frac{\partial }{\partial a} \frac{a - jb}{\sqrt{a^2 + b^2}}
+j \frac{\partial }{\partial b} \frac{a - jb}{\sqrt{a^2 + b^2}}\bigg)\\
&= \frac{1}{2}\bigg(\frac{1}{a^2 + b^2} 
\bigg( \sqrt{a^2+b^2} - a\frac{a - jb}{\sqrt{a^2+b^2}}\bigg)
-j \frac{1}{a^2 + b^2} \bigg(j\sqrt{a^2+b^2} + b\frac{a - jb}{\sqrt{a^2+b^2}}\bigg)\bigg)\\
&= \frac{1}{2}\frac{1}{a^2 + b^2} 
\bigg(- \frac{(a - jb)(a + jb)}{\sqrt{a^2+b^2}}+2 \sqrt{a^2 + b^2}\bigg)
\label{eq:l2}
\end{align}
Applying $a^2 + b^2 = |c|^2$, $(a - jb)(a+jb)=|c|^2$,
and $\sqrt{a^2 + b^2} = |c|$ to \eqref{eq:l2}, we derive \eqref{eq:lemma2}.
\end{proof}

We are then dedicated to computing the Wirtinger derivative of $L$ with respect to $\bar{c}_n$. 
In accordance with the chain rule~\cite{Wirtinger27}, we derive the following equation:
\begin{align}
\frac{\partial}{\partial \bar{c}_n}L=\frac{1}{2N^2}
\sum_{m=0}^{M-1} \bigg(|p_m(\boldsymbol{c})|^2 - |i_m|^2\bigg)
\frac{\partial}{\partial \bar{c}_n}|p_m(\boldsymbol{c})|^2.
\label{eq:g}
\end{align}
Applying the product rule~\cite{Wirtinger27},
we obtain the derivative of $|p_m(\boldsymbol{c})|^2$ as follows:
\begin{align}
\frac{\partial}{\partial \bar{c}_n}|p_m(\boldsymbol{c})|^2
&=\frac{\partial}{\partial \bar{c}_n}
\overline{p_m(\boldsymbol{c})}p_m(\boldsymbol{c})
\label{eq:d1}\\
&=\overline{p_m(\boldsymbol{c})}
\frac{\partial}{\partial \bar{c}_n}p_m(\boldsymbol{c})+
p_m(\boldsymbol{c})
\frac{\partial}{\partial \bar{c}_n}\overline{p_m(\boldsymbol{c})}.
\label{eq:d1c}
\end{align}
Derivatives in \eqref{eq:d1c} can be represented as the following summation formulas:
\begin{align}
\frac{\partial}{\partial \bar{c}_n}p_m(\boldsymbol{c})&=
\sum_{n'=0}^{N-1} 
\frac{\partial}{\partial \bar{c}_n} w_{mn'} h_{n'}(c_{n'})
\label{eq:d2}\\
\frac{\partial}{\partial \bar{c}_n}\overline{p_m(\boldsymbol{c})}&=
\sum_{n'=0}^{N-1} 
\frac{\partial}{\partial \bar{c}_n} \overline{w_{mn'} h_{n'}(c_{n'})}.
\label{eq:d2c}
\end{align}
When $n' \neq n$, the derivatives in \eqref{eq:d2} and \eqref{eq:d2c} are nullified; we derive
\begin{align}
\frac{\partial}{\partial \bar{c}_n}p_m(\boldsymbol{c})&=
\frac{\partial}{\partial \bar{c}_n} w_{mn} h_n(c_n)=
w_{mn} \frac{\partial}{\partial \bar{c}_n} \frac{c_n}{|c_n|}
\label{eq:d3}\\
\frac{\partial}{\partial \bar{c}_n}
\overline{p_m(\boldsymbol{c})}&=
\frac{\partial}{\partial \bar{c}_n} \overline{w_{mn} h_n(c_n)}=
\bar{w}_{mn} \frac{\partial}{\partial \bar{c}_n} 
\frac{\bar{c}_n}{|\bar{c}_n|}.
\label{eq:d3c}
\end{align}
Applying Lemmas~\ref{l1} and \ref{l2} to \eqref{eq:d3} and \eqref{eq:d3c}, respectively, we derive
\begin{align}
\text{RHS of \eqref{eq:d3}}&=
-\frac{1}{2}w_{mn}\frac{1}{|c_n|}\bigg(\frac{c_n}{|c_n|}\bigg)^2
\label{eq:d4}\\
\text{RHS of \eqref{eq:d3c}}&=
\frac{1}{2}\bar{w}_{mn}\frac{1}{|c_n|}.
\label{eq:d4c}
\end{align}
By substituting \eqref{eq:d1}--\eqref{eq:d4c} to \eqref{eq:g}, we derive
\begin{align}
\frac{\partial}{\partial \bar{c}_n}L&=\frac{1}{4N^2}
\sum_{m=0}^{M-1} \bigg(|p_m(\boldsymbol{c})|^2 - |i_m|^2\bigg) 
\bigg(
-\overline{p_m(\boldsymbol{c})} 
w_{mn} \frac{1}{|c_n|}\bigg(\frac{c_n}{|c_n|}\bigg)^2 +
p_m(\boldsymbol{c}) \bar{w}_{mn} \frac{1}{|c_n|}
\bigg)\\
&=\frac{1}{4N^2}\frac{1}{|c_n|}\frac{c_n}{|c_n|}
\sum_{m=0}^{M-1} 
\underbrace{
\bigg(
|p_m(\boldsymbol{c})|^2 - |i_m|^2
\bigg)}_{R_m}
\bigg(
-\underbrace{\overline{p_m(\boldsymbol{c})} w_{mn} \frac{c_n}{|c_n|}}_{\bar{I}_m} +
\underbrace{p_m(\boldsymbol{c}) \bar{w}_{mn} 
\frac{\bar{c}_n}{|\bar{c}_n|}}_{I_m}
\bigg).
\label{eq:g'}
\end{align}
Note that $R_m$ is a real number, 
and that $I_m-\bar{I}_m$ is a pure imaginary number because $\bar{I}_m$ and $I_m$ have a conjugate relationship.
Using these facts and several fundamental properties of the complex numbers, 
we obtain the following identity:
\begin{equation}
\sum_{m=0}^{M-1} R_m (I_m-\bar{I}_m) = 
2j\sum_{m=0}^{M-1} R_m \Im [I_m] =
2j\sum_{m=0}^{M-1} \Im [R_m I_m] = 
2j\Im \bigg[\sum_{m=0}^{M-1} R_m I_m\bigg],
\label{eq:id}
\end{equation} 
where $\Im$ represents the imaginary part of a complex variable.
Applying \eqref{eq:id} to \eqref{eq:g'}, we derive
\begin{align}
\text{RHS of \eqref{eq:g'}}&=\frac{j}{2N^2}\frac{1}{|c_n|}\frac{c_n}{|c_n|}
\Im \bigg[
\sum_{m=0}^{M-1} \bigg(|p_m(\boldsymbol{c})|^2 - |i_m|^2\bigg)
p_m(\boldsymbol{c}) \bar{w}_{mn} 
\frac{\bar{c}_n}{|\bar{c}_n|}\bigg].
\label{eq:grad}
\end{align}
Equation \eqref{eq:grad} shows that the phase of $\partial L/\partial \bar{c}_n$ depends on $j c_n/|c_n|$ and the sign of the imaginary part.
Therefore, the phase of $\partial L/\partial \bar{c}_n$ is 
\begin{equation}
\mathop{\mathrm{arg}} \bigg(\frac{\partial}{\partial \bar{c}_n}L\bigg)=
\mathop{\mathrm{arg}} (\pm jc_n).
\end{equation}
Note that $\pm j$ rotates $c_n$ by $\pm\pi/2$ on the Gaussian plane.
This fact indicates Theorem~\ref{t1}.

\section{Proof of Theorem~\ref{t2}}
\label{a2}
Equation~\eqref{eq:grad} shows that the amplitude of $\partial L / \partial \bar{c}_n$ depends on $1/2N^2$,$1/|c_n|$, 
and the imaginary part.
Therefore, 
\begin{align}
\bigg|\frac{\partial}{\partial \bar{c}_n}L\bigg| &=
\frac{1}{2N^2}
\frac{1}{|c_n|} 
\bigg| \Im \bigg[
\underbrace{\vphantom{\sum_{m=0}^{M-1}}
\frac{\bar{c}_n}{|\bar{c}_n|}}_{\alpha}
\underbrace{\sum_{m=0}^{M-1} \bigg(|p_m(\boldsymbol{c})|^2 - |i_m|^2\bigg) 
p_m(\boldsymbol{c}) \bar{w}_{mn}}_{\beta} \bigg]\bigg|.
\label{eq:g_abs}
\end{align}
For a complex scalar $\alpha\beta$, the following inequality holds:
\begin{equation}
|\Im [\alpha\beta]| \leq |\alpha \beta| = 
|\alpha| |\beta| = |\beta|.
\label{eq:ineq}
\end{equation}
Applying \eqref{eq:ineq} to \eqref{eq:g_abs}, we derive
\begin{align}
\text{RHS of \eqref{eq:g_abs}}
&\leq \frac{1}{2N^2}\frac{1}{|c_n|} \bigg| 
\sum_{m=0}^{M-1} 
\bigg(|p_m(\boldsymbol{c})|^2 - |i_m|^2\bigg) 
p_m(\boldsymbol{c}) \bar{w}_{mn}\bigg|\\
&=\hspace{0.4mm} \frac{1}{2N^2}\frac{1}{|c_n|} \bigg| 
\sum_{m=0}^{M-1} 
|p_m(\boldsymbol{c})|^2 p_m(\boldsymbol{c}) 
\bar{w}_{mn} - 
\sum_{m=0}^{M-1} 
|i_m|^2 p_m(\boldsymbol{c}) \bar{w}_{mn}\bigg|.
\label{eq:g_abs2}
\end{align}
As mentioned in Section~\ref{s2},
when the propagation distance is infinite, 
$w_{mn}$ can be interpreted as the Fourier basis function.
Therefore, $p_m(\boldsymbol{c})$ equals the Fourier transform~(FT) of $h_n(c_n)$.
Let $\boldsymbol{h}$ denotes a vector formed by stacking all $h_n(c_n)$.
In accordance wtih the Wiener-Khinchin theorem~\cite{Wiener30,Khintchine34},
we can interpret $|p_m|^2$ as the FT of the autocorrelation of $\boldsymbol{h}$.
Namely,
\begin{equation}
|p_m|^2 =\sum_{n=0}^{N-1} w_{mn} (\bar{\boldsymbol{h}} * \boldsymbol{h})_n,
\label{eq:ac1}
\end{equation}
where $*$ is a convolution operator.
Let $\boldsymbol{C}\in \mathbb{C}^N$ be a complex vector that satisfies $|i_m|^2 =|p_m(\boldsymbol{C})|^2$,
and $\boldsymbol{H}$ denote a vector with all elements $h_n(C_n)$.
Similar to \eqref{eq:ac1},
we can interpret $|i_m|^2$ as the FT of the autocorrelation of $\boldsymbol{H}$, i.e.,
\begin{equation}
|i_m|^2 =\sum_{n=0}^{N-1} w_{mn} (\bar{\boldsymbol{H}} * \boldsymbol{H})_n.
\label{eq:ac2}
\end{equation}
In summary, $p_m$, $|p_m|^2$, and $|i_m|^2$ are identical to the FTs of $\boldsymbol{h}$, $\bar{\boldsymbol{h}}*\boldsymbol{h}$, and $\bar{\boldsymbol{H}}*\boldsymbol{H}$, respectively.
Because $\bar{w}_{mn}$ is the inverse Fourier basis function,
the following relationship holds on the basis of the convolution theorem and triangle inequality:
\begin{align}
\text{RHS of \eqref{eq:g_abs2}}&=\hspace{0.4mm}
\frac{1}{2N^2}\frac{1}{|c_n|}
|(\bar{\boldsymbol{h}} * \boldsymbol{h} * \boldsymbol{h})_n-
 (\bar{\boldsymbol{H}} * \boldsymbol{H} * \boldsymbol{h})_n|\\
&\leq
\frac{1}{2N^2}\frac{1}{|c_n|}
|(\bar{\boldsymbol{h}} * \boldsymbol{h} * \boldsymbol{h})_n|+
\frac{1}{2N^2}\frac{1}{|c_n|}
|(\bar{\boldsymbol{H}} * \boldsymbol{H} * \boldsymbol{h})_n|.
\label{eq:g_abs3}
\end{align}
Here, we use the fact that $|(\bar{\boldsymbol{h}} * \boldsymbol{h} * \boldsymbol{h})_n|$ can be bounded by the $\ell_\infty$ norm $\|\bar{\boldsymbol{h}} * \boldsymbol{h} * \boldsymbol{h}\|_\infty$.
In accordance with this fact and the Young's inequality~\cite{Young12},
we derive
\begin{align}
|(\bar{\boldsymbol{h}} * \boldsymbol{h} * \boldsymbol{h})_n| \leq
\|\bar{\boldsymbol{h}} * \boldsymbol{h} * \boldsymbol{h}\|_\infty \leq
\|\bar{\boldsymbol{h}} * \boldsymbol{h}\|_2 \|\boldsymbol{h}\|_2 \leq 
\|\bar{\boldsymbol{h}}\|_1 \|\boldsymbol{h}\|_2 \|\boldsymbol{h}\|_2=N^2.
\label{eq:norm1}
\end{align}
Similarly,
\begin{align}
|(\bar{\boldsymbol{H}} * \boldsymbol{H} * \boldsymbol{h})_n| \leq N^2.
\label{eq:norm2}
\end{align}
Applying \eqref{eq:norm1} and \eqref{eq:norm2} to \eqref{eq:g_abs3}, we derive \eqref{eq:thm2}.

\end{document}